 \journalname{  }
\newtheorem{cor}[theorem]{Corollary}
\begin{document}
\title{Duality of generalized twisted Reed-Solomon codes and Hermitian
self-dual MDS or NMDS codes\\ \thanks{This research was funded by
the National Natural Science Foundation of China (Grant No.
11901579), Natural Science Foundation of Shaanxi Province (Grant
Nos. 2021JQ-335 and 2021JM-216), and Support fund for Excellent
Doctoral Dissertation of Air Force Engineering University (Grant No.
KGD083920015). } }


\author{Guanmin Guo  \and Ruihu Li \and Yang Liu \and Hao Song 
}

\institute{Guanmin Guo$^{*}$ \at \email{gmguo$\_$xjtukgd@yeah.net}   \\
          Ruihu Li \at \email{llzsy110@126.com}  \\
          Yang Liu \at \email{liu$\_$yang10@163.com}   \\
          Hao Song \at \email{songhao$\_$kgd@163.com}   \\
\at Fundamentals Department, Air Force Engineering University,
 Xi'an, Shaanxi 710051,  P. R. China}

\date{Received: date / Accepted: date}

\maketitle

\begin{abstract}
Self-dual MDS and NMDS codes over finite fields are linear codes
with significant combinatorial and cryptographic applications. In
this paper, firstly, we investigate the duality properties of
generalized twisted Reed-Solomon (abbreviated GTRS) codes in some
special cases. In what follows, a new systematic approach is
proposed to draw Hermitian self-dual (+)-GTRS codes. The necessary
and sufficient conditions of a Hermitian self-dual (+)-GTRS code are
presented. With this method, several classes of Hermitian self-dual
MDS and NMDS codes are constructed.

\keywords{Hermitian self-dual \and  generalized twisted Reed-Solomon
codes \and  MDS codes \and   NMDS codes}
\end{abstract}

\section{Introduction}

Maximum distance separable (MDS) codes are optimal because they
attain the maximal achievable minimum distance $d=n-k+1$ of length
$n$ and dimension $k$, which have the largest error-correcting
capability for given a code rate. The most famous family of MDS
codes is (extended) generalized Reed-Solomon (for short GRS and
EGRS) codes. There are, of course, other non-Reed-Solomon type MDS
codes \cite{Roth1}. Near MDS (i.e. NMDS) codes are introduced in
\cite{Dodunekov} by slightly weakening the restrictive conditions in
the definition of MDS codes, which are closely connected to
interesting objects in finite geometry and have applications in
combinatorics \cite{Dodunekov,Landjev} and secret sharing scheme
\cite{Zhou}. Similarly, because of their special algebraic
structure, self-dual codes are another family of linear codes worth
studying, and have important applications in cryptographic protocols
\cite{Dougherty,Massey}. For those reasons, constructing (Hermitian)
self-dual MDS and NMDS codes is thus becoming a significant research
topic in the theory of classical error-correcting codes. Analogous
with the construction of (Hermitian) self-dual MDS codes, it is also
challenging to determine the existence of a (Hermitian) self-dual
NMDS code.

In recent years, researchers are trying to use different techniques
to focus on investigating Euclidean and Hermitian self-dual MDS
codes, especially for Euclidean case, via building-up construction
method \cite{Kim,Gulliver}, and constacyclic codes
\cite{Guenda,Tong}, Glynn codes \cite{Baicheva}, rational function
fields \cite{Sok}. Especially recently, many Euclidean self-dual MDS
codes have been presented by utilizing GRS codes
\cite{Jin,Fang,Fang1,Fang4,Zhang2}. In \cite{Gulliver}, Gulliver
\emph{et al.} also construct Euclidean self-dual NMDS codes of
length $n=q-1$ ($q$ is power of odd prime) derived from Reed-Solomon
(i.e. RS) codes. In \cite{Kotsireas}, some self-dual NMDS codes with
length $n \leq 16$ were constructed over some small prime fields.
Jin and Kan \cite{Jin1} make use of properties of elliptic curves to
construct some self-dual NMDS codes. Consequently, constructing
self-dual NMDS codes remains an open problem for a large range of
parameters. As far as we know, however, there are few research
results on Hermitian self-dual MDS and NMDS codes, for a few
results, see \cite{Guo,Niu}.


In 2017, enlighten by the construction of twisted Gabidulin codes
\cite{Sheekey} in rank metric, Beelen \emph{et al.} \cite{Beelen1}
introduce a new family of linear evaluation codes in Hamming metric:
twisted Reed-Solomon (i.e. TRS) codes. The idea of TRS codes is
based on RS codes, by adding further monomials, so called ``twist",
and selecting the evaluation points appropriately. Afterwards,
Beelen \emph{et al.} \cite{Beelen2} also propose the generalization
of the single-twist Reed-Solomon codes in \cite{Beelen1} to the
multi-twist composition. TRS codes are also shown to be largely
distinct from GRS codes, which have much larger Schur squares
dimension than a GRS code with the same parameters. Meanwhile, a
subfamily of TRS codes are proposed as an alternative to Goppa codes
for the McEliece cryptosystem \cite{Beelen2,Lavauzelle}, which is a
public-key cryptosystem and one of the candidates for post-quantum
cryptography, resulting in a potential reduction of key sizes. We
call the extension of TRS codes by generalized TRS (i.e. GTRS)
codes. For other recent studies on GTRS codes, please refer to
\cite{Huang,Wu,WHL2021}. In general, TRS codes are not MDS,
nevertheless certain subclasses may be MDS or NMDS which are
constructed by a suitable choice of the evaluation points and twist
coefficients. What's more famous is that (+)-twisted Reed-Solomon
codes \cite{Beelen1}, which is called (+)-TRS codes for simplicity.
In \cite{Huang}, Huang \emph{et al.} represent the form of check
matrix of (+)-GTRS codes.


In this paper, we firstly prove that GTRS codes are also closed
under Euclidean duality if we choose evaluation points which form a
multiplicative group. In the following, we present the necessary and
sufficient conditions of a (+)-GTRS code is Hermitian self-dual and
give a new efficient construction method for self-dual (+)-GTRS
codes with respect to the Hermitian inner product. By applying the
new method, we draw several classes of Hermitian self-dual MDS and
NMDS codes, respectively.

The remainder of this paper is organized as follows. Basic notations
and results about GTRS codes and NMDS codes are provided in Section
II. The main contributions are presented in Section III. Some final
remarks and hints for future works conclude the paper in Section IV.

\section{Preliminaries}

In this section, we recall some definitions and basic theory of
Hermitian self-dual codes, GTRS codes, and NMDS codes.

\subsection{Hermitian self-dual codes}

Let $q$ be a prime power and $\mathbb{F}_{q}$ be the finite field
with $q$ elements. Assume that $n$ and $q$ are coprime, that is
gcd$(n,q)=1, \mathbb{F}_{q^{*}}=\mathbb{F}_{q}\backslash \{0\}$. Let
$\mathbb{F}_{q}^{n}$ denote the vector space of all $n$-tuples over
the finite field $\mathbb{F}_{q}$. If $\mathcal{C}$ is a
$k$-dimensional subspace of $\mathbb{F}_{q}^{n}$, then $\mathcal{C}$
will be called an $[n, k]$ linear code over $\mathbb{F}_{q}.$ The
linear code $\mathcal{C}$ has $q^{k}$ codewords.

Let $\mathbf{x}=(x_{1},x_{2},\ldots,x_{n})$,
$\mathbf{y}=(y_{1},y_{2},\ldots,y_{n})\in \mathbb{F}_{q^2}^{n}$,
here we review that the Euclidean inner product of vectors
$\mathbf{x}, \mathbf{y}$ is
\begin{equation}
\label{eqn_example} \langle\mathbf{x},
\mathbf{y}\rangle_{E}=\sum_{i=1}^{n} x_{i} y_{i}.
\end{equation}

The Euclidean dual code of $\mathcal{C}$ is defined as
\begin{equation}
\label{eqn_example1.1} \mathcal{C}^{\perp_{E}}=\{\mathbf{x} \mid
\mathbf{x} \in \mathbb{F}_{q^2}^n,
\left\langle\mathbf{x},\mathbf{y}\right\rangle_{E}=0, \text { for
all } \mathbf{y} \in\mathcal{C} \}.
\end{equation}

It is always useful to consider another inner product, called the
Hermitian inner product.
\begin{equation}
\label{eqn_example} \langle\mathbf{x},
\mathbf{y}\rangle_{H}=\sum_{i=1}^{n} x_{i} y_{i}^{q}.
\end{equation}

Analogous to (\ref{eqn_example1.1}), we can define the Hermitian
dual of $\mathcal{C}$ as follows by using this inner product.
\begin{equation}
\label{eqn_example} \mathcal{C}^{\perp_{H}}=\{\mathbf{x} \mid
\mathbf{x} \in \mathbb{F}_{q^2}^n,
\left\langle\mathbf{x},\mathbf{y}\right\rangle_{H}=0, \text { for
all } \mathbf{y} \in\mathcal{C} \}.
\end{equation}

Namely, $\mathcal{C}^{\perp_{H}}$ is the orthogonal subspace to
$\mathcal{C}$, with respect to the Hermitian inner product. We also
have Hermitian self-orthogonality and Hermitian self-duality. If
$\mathcal{C} \subseteq \mathcal{C}^{\perp_{H}}$, then
$\mathcal{C}^{\perp_{H}}$ is Hermitian self-orthogonal.
Particularly, if $\mathcal{C}^{\perp_{H}} = \mathcal{C}$, then
$\mathcal{C}$ is Hermitian self-dual.


\subsection{GTRS codes and NMDS codes}

The GTRS codes are formally defined as follows, for more details we
refer to \cite{Beelen1,Beelen2}.

\begin{definition} Let $n, k, \ell \in \mathbb{N}$ be positive integers, where
$k<n$, $\ell \leq n-k$. Choose a twist vector $\boldsymbol{t}=
\left(t_{1}, t_{2}, \ldots, t_{\ell}\right) \in\{1, \ldots,
n-k\}^{\ell}$ such that the $t_{i}(1 \leq i \leq \ell)$ are
distinct, and a hook vector $\boldsymbol{h} =\left(h_{1}, h_{2},
\ldots, h_{\ell}\right) \in\{0, \ldots, k-1\}^{\ell}$ such that the
$h_{i}(1 \leq i \leq \ell)$ are also distinct. Set
$\boldsymbol{\eta} =\left(\eta_{1}, \eta_{2}, \ldots,
\eta_{\ell}\right) \in (\mathbb{F}_{q}^{\ast})^{\ell}$. The set of
$[k, \boldsymbol{t}, \boldsymbol{h}, \boldsymbol{\eta}]$-twisted
polynomials over $\mathbb{F}_{q}$ is defined by
\begin{equation}
\mathcal{P}_{k, n}[\boldsymbol{t}, \boldsymbol{h},
\boldsymbol{\eta}]=\left\{f=\sum_{i=0}^{k-1} f_{i}
x^{i}+\sum_{j=1}^{\ell} \eta_{j} f_{h_{j}} x^{k-1+t_{j}}: f_{i} \in
\mathbb{F}_{q}\right\}.
\end{equation}
\end{definition}

\begin{definition} Let $\boldsymbol{\alpha}=\left(\alpha_{1},
\alpha_{2}, \ldots, \alpha_{n}\right) \in \mathbb{F}_{q}^{n}$ be
pairwise distinct, $\boldsymbol{v}=(v_{1}, v_{2}, $ $\ldots, v_{n})
\in (\mathbb{F}_{q}^{\ast})^{n}$ and $1 \leq k \leq n.$ Let
$\boldsymbol{t}, \boldsymbol{h}, \boldsymbol{\eta}$ and
$\mathcal{P}_{k, n}[\boldsymbol{t}, \boldsymbol{h},
\boldsymbol{\eta}]$ be defined as above. The $[\boldsymbol{\alpha},
\boldsymbol{v}, \boldsymbol{t}, \boldsymbol{h}, \boldsymbol{\eta}]$
-GTRS code of length $n$ and dimension $k$ is defined by
\begin{equation}
GTRS_{k,n}[\boldsymbol{\alpha}, \boldsymbol{v}, \boldsymbol{t},
\boldsymbol{h}, \boldsymbol{\eta}]:=\{[v_{1}f(\alpha_{1}),
v_{2}f(\alpha_{2}),\ldots, v_{n}f(\alpha_{n})]:f \in \mathcal{P}_{k,
n}[\boldsymbol{t}, \boldsymbol{h}, \boldsymbol{\eta}]\}.
\end{equation}
\end{definition}


The elements $\alpha_{1}, \alpha_{2}, \ldots, \alpha_{n}$ are called
the \textit{code locators (evaluation points)} of $GTRS_{k,n}$
$[\boldsymbol{\alpha}, \boldsymbol{v}, \boldsymbol{t},
\boldsymbol{h}, \boldsymbol{\eta}]$, and the elements $v_{1}, v_{2},
\ldots, v_{n}$ are called the \textit{column multipliers}. The set
of twisted polynomials $\mathcal{P}_{k, n}[\boldsymbol{t},
\boldsymbol{h}, \boldsymbol{\eta}] \subseteq \mathbb{F}_{q}[x]$
forms a $k$-dimensional $\mathbb{F}_{q}$-linear subspace, so a GTRS
code is linear code.

Let us recall the definition of NMDS codes as follows.

\begin{definition} (\cite{Dodunekov}) A linear code with
parameters of the form $[n,k,n - k]$ is said to be almost MDS (i.e.
AMDS). Particularly, An AMDS code is an NMDS code if the dual code
is also an AMDS code.
\end{definition}

\section{Main Results}

\subsection{Euclidean dual of GTRS codes}

It is known that the dual code of a GRS code is also a GRS code. In
contrast to GRS codes, GTRS also do not generally seem to be closed
under duality. However, if we choose evaluation points which form a
multiplicative group, this yields to the following results.

Firstly, denote the reversal matrix $\boldsymbol{J}_{k} \in
\mathbb{F}_{q}^{k \times k}$ by the square matrix
\begin{equation}
\label{eqn_example} \boldsymbol{J}_{k}=\left(\begin{array}{lll}
& & 1 \\
&  \iddots & \\
1 & &
\end{array}\right).
\end{equation}

We denote by $\boldsymbol{V}_{n}(\boldsymbol{\alpha})$ the $n \times
n$ Vandermonde matrix over $\boldsymbol{\alpha},$ and
$\boldsymbol{\Lambda}$ is the diagonal matrix
$\operatorname{diag}\left(v_{1}, v_{2}, \ldots, v_{n}\right),$ where
\begin{equation}
\label{eqn_example}
\boldsymbol{V}_{n}(\boldsymbol{\alpha})=\left(\begin{array}{cccc}
1 & 1 & \ldots & 1 \\
\alpha_{1} & \alpha_{2} & \ldots & \alpha_{n} \\
\vdots & \vdots & \ddots & \vdots \\
\alpha_{1}^{n-1} & \alpha_{2}^{n-1} & \ldots & \alpha_{n}^{n-1}
\end{array}\right).
\end{equation}

\begin{theorem} Let $\mathcal{C}$ be an $[n, k]$ linear code
with a generator matrix of the form
\begin{equation}
\label{eqn_example} \boldsymbol{G}=\left[\begin{array}{ll}
\boldsymbol{I} \mid \boldsymbol{L}
\end{array}\right] \cdot
(\boldsymbol{V}_{n}(\boldsymbol{\alpha})\boldsymbol{\Lambda}),
\end{equation}
where $\boldsymbol{I} \in \mathbb{F}_{q}^{k \times k}$ is the
identity matrix, $\boldsymbol{L} \in \mathbb{F}_{q}^{k \times
(n-k)},$ and the entries of $\boldsymbol{\alpha} \in
\mathbb{F}_{q}^{n}$ are distinct and form a multiplicative group.
Then the Euclidean dual code $\mathcal{C}^{\perp_{E}}$ has generator
matrix with the form
\begin{equation}
\label{eqn_example} \boldsymbol{H}=[\boldsymbol{I} \mid
\boldsymbol{J}_{n-k} (-\boldsymbol{L}^{T}) \boldsymbol{J}_{k}] \cdot
\boldsymbol{V}_{n}(\boldsymbol{\alpha})
\operatorname{diag}(\boldsymbol{\alpha} /
n)\boldsymbol{\Lambda}^{-1}.
\end{equation}
\end{theorem}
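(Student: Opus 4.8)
The plan is to verify directly that every row of $\boldsymbol{H}$ is Euclidean-orthogonal to every row of $\boldsymbol{G}$, and then to check that $\boldsymbol{H}$ has full rank $n-k$, which together force $\mathcal{C}^{\perp_E}$ to be exactly the row space of $\boldsymbol{H}$ (since $\dim \mathcal{C}^{\perp_E} = n-k$). The key algebraic ingredient is the following ``near-orthogonality'' property of Vandermonde matrices built on a multiplicative group: if the entries of $\boldsymbol{\alpha}$ are the $n$ distinct $n$-th roots of unity in $\mathbb{F}_q$, then $\boldsymbol{V}_n(\boldsymbol{\alpha})\,\boldsymbol{V}_n(\boldsymbol{\alpha})^T$ has $(i,j)$ entry $\sum_{s=1}^n \alpha_s^{i+j-2}$, and this power sum equals $n$ when $i+j-2 \equiv 0 \pmod n$ and $0$ otherwise. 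Concretely, for $2 \le i+j \le 2n$, the only way to get a nonzero sum is $i+j-2 = 0$ or $i+j-2 = n$, i.e. $i=j=1$ or $i+j = n+2$. Hence $\boldsymbol{V}_n(\boldsymbol{\alpha})\,\boldsymbol{V}_n(\boldsymbol{\alpha})^T = n\,\boldsymbol{P}$, where $\boldsymbol{P}$ is a sparse permutation-like matrix recording the pairings $i \leftrightarrow n+2-i$. I would first isolate and prove this identity as a small lemma, since it is the engine of the whole argument.

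Next I would rewrite that identity in a block form adapted to the split $n = k + (n-k)$. Writing $\boldsymbol{V}_n(\boldsymbol{\alpha}) = \begin{bmatrix}\boldsymbol{V}_k \\ \boldsymbol{W}\end{bmatrix}$ with $\boldsymbol{V}_k$ the top $k$ rows (degrees $0,\dots,k-1$) and $\boldsymbol{W}$ the bottom $n-k$ rows (degrees $k,\dots,n-1$), the pairing $i \leftrightarrow n+2-i$ sends the degree-$0$ row to itself, degrees $1,\dots,n-1$ to degrees $n-1,\dots,1$ reversed, so it interchanges a reversed copy of $\boldsymbol{V}_k$'s rows $1,\dots,k-1$ with $\boldsymbol{W}$'s rows, up to the lone self-paired top row. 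This is exactly where the reversal matrices $\boldsymbol{J}_k$ and $\boldsymbol{J}_{n-k}$ enter: the block identities will read, schematically, $\boldsymbol{V}_k \boldsymbol{W}^T = n\,\boldsymbol{J}_k\begin{bmatrix}\boldsymbol{0}\\ \boldsymbol{I}_{n-k}\end{bmatrix}$-type relations and $\boldsymbol{V}_k \boldsymbol{V}_k^T$, $\boldsymbol{W}\boldsymbol{W}^T$ being the appropriate pieces of $n\boldsymbol{P}$. I would also need the companion identity involving $\operatorname{diag}(\boldsymbol{\alpha})$: since multiplying a Vandermonde column by $\alpha_s$ shifts degrees up by one, $\boldsymbol{V}_n(\boldsymbol{\alpha})\operatorname{diag}(\boldsymbol{\alpha})\boldsymbol{V}_n(\boldsymbol{\alpha})^T$ has $(i,j)$ entry $\sum_s \alpha_s^{i+j-1}$, nonzero (equal to $n$) precisely when $i+j = n+1$; this shifted pairing $i \leftrightarrow n+1-i$ is what makes the factor $\operatorname{diag}(\boldsymbol{\alpha}/n)$ in $\boldsymbol{H}$ land correctly.

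With these identities in hand, the orthogonality check is a matrix computation:
\begin{equation}
\boldsymbol{G}\boldsymbol{H}^T
= [\boldsymbol{I}\mid\boldsymbol{L}]\,\boldsymbol{V}_n(\boldsymbol{\alpha})\boldsymbol{\Lambda}
\;\boldsymbol{\Lambda}^{-1}\operatorname{diag}(\boldsymbol{\alpha}/n)\,\boldsymbol{V}_n(\boldsymbol{\alpha})^T\,
[\boldsymbol{I}\mid \boldsymbol{J}_{n-k}(-\boldsymbol{L}^T)\boldsymbol{J}_k]^T,
\end{equation}
where the two $\boldsymbol{\Lambda}$ factors cancel, leaving $\tfrac1n\boldsymbol{V}_n(\boldsymbol{\alpha})\operatorname{diag}(\boldsymbol{\alpha})\boldsymbol{V}_n(\boldsymbol{\alpha})^T$ sandwiched between $[\boldsymbol{I}\mid\boldsymbol{L}]$ and the transpose of the other bracket. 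Substituting the shifted-pairing identity collapses the middle factor to a reversal/permutation matrix, and then the terms $\boldsymbol{L}$ and $-\boldsymbol{J}_k^T\boldsymbol{L}^T\boldsymbol{J}_{n-k}^T = -\boldsymbol{J}_k\boldsymbol{L}^T\boldsymbol{J}_{n-k}$ are arranged exactly so that the two contributions cancel, giving $\boldsymbol{G}\boldsymbol{H}^T = \boldsymbol{0}$. Full rank of $\boldsymbol{H}$ is clear since $\boldsymbol{V}_n(\boldsymbol{\alpha})$ is invertible (distinct $\alpha_i$), $\boldsymbol{\Lambda}^{-1}$ and $\operatorname{diag}(\boldsymbol{\alpha}/n)$ are invertible (all $v_i \ne 0$, all $\alpha_i \ne 0$ since they form a group), and $[\boldsymbol{I}\mid *]$ has rank $n-k$; hence the row space of $\boldsymbol{H}$ is an $(n-k)$-dimensional subspace of $\mathcal{C}^{\perp_E}$, so it equals $\mathcal{C}^{\perp_E}$. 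I expect the main obstacle to be purely bookkeeping: getting the reversal matrices $\boldsymbol{J}_k$, $\boldsymbol{J}_{n-k}$ on the correct side and tracking how the index reversal $i \mapsto n+1-i$ splits across the $k \mid n-k$ block boundary, so that the sign and the transpose in $-\boldsymbol{J}_{n-k}\boldsymbol{L}^T\boldsymbol{J}_k$ come out precisely right; the underlying power-sum vanishing is elementary.
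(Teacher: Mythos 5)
Your proposal is correct and follows essentially the same route as the paper: both reduce to showing $\boldsymbol{G}\boldsymbol{H}^{T}=\boldsymbol{0}$ together with the rank count $\operatorname{rank}(\boldsymbol{H})=n-k$, with the engine being the Vandermonde identity over an $n$-th-roots-of-unity support (your $\boldsymbol{V}_n(\boldsymbol{\alpha})\operatorname{diag}(\boldsymbol{\alpha}/n)\boldsymbol{V}_n(\boldsymbol{\alpha})^{T}=\boldsymbol{J}_n$ is just the paper's cited identity $(\boldsymbol{V}^{T})^{-1}=\boldsymbol{J}\boldsymbol{V}\operatorname{diag}(\boldsymbol{\alpha}/n)$ multiplied through by $\boldsymbol{V}^{T}$). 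The only difference is that you prove the power-sum vanishing from scratch where the paper cites it from the literature, and you cancel the blocks via the middle factor $\boldsymbol{J}_n$ rather than by rewriting the bracket as $\boldsymbol{J}_{n-k}[-\boldsymbol{L}^{T}\mid\boldsymbol{I}]\boldsymbol{J}_n$; both yield the same cancellation $-\boldsymbol{L}+\boldsymbol{L}=\boldsymbol{0}$.
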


\begin{proof} 


%
Since the entries of $\boldsymbol{\alpha}$ form a multiplicative
group, we have $\alpha_{i}^{n}=1$,  $1 \leq i \leq n$ and by
\cite{Beelen2}, we obtain
\begin{equation}
\label{eqn_example} (\boldsymbol{V}^{T})^{-1}=\boldsymbol{J} \cdot
\boldsymbol{V} \cdot \operatorname{diag}(\boldsymbol{\alpha} / n).
\end{equation}

Since $\boldsymbol{H}$ has rank $n-k$ so left is to show
$\boldsymbol{G} \cdot \boldsymbol{H}^{\mathrm{T}}=\mathbf{0}$. Note
that
{\normalsize
\begin{eqnarray*}
 && \boldsymbol{G} \cdot \boldsymbol{H}^{T} \\
&=& [\boldsymbol{I} \mid \boldsymbol{L}] (
\boldsymbol{V}\boldsymbol{\Lambda}) \cdot ([\boldsymbol{I} \mid
\boldsymbol{J}_{n-k} (-\boldsymbol{L}^{T}) \boldsymbol{J}_{k}] \cdot
\boldsymbol{V}
\operatorname{diag}(\boldsymbol{\alpha} / n)\boldsymbol{\Lambda}^{-1})^{T} \\
&=& [\boldsymbol{I} \mid \boldsymbol{L}]
(\boldsymbol{V}\boldsymbol{\Lambda}) \cdot
(\boldsymbol{J}_{n-k}[-\boldsymbol{L}^{\mathrm{T}} \mid
\boldsymbol{I}]  \boldsymbol{J}_{n} \cdot
\boldsymbol{V} \operatorname{diag}(\boldsymbol{\alpha} / n)\boldsymbol{\Lambda}^{-1})^{T}\\
&=& [\boldsymbol{I} \mid \boldsymbol{L}]
(\boldsymbol{V}\boldsymbol{\Lambda}) \cdot
(\boldsymbol{J}_{n-k}[-\boldsymbol{L}^{\mathrm{T}} \mid
\boldsymbol{I}] (\boldsymbol{V}^{-1})^{\mathrm{T}} \boldsymbol{\Lambda}^{-1})^{T}\\
&=& [\boldsymbol{I} \mid \boldsymbol{L}][\begin{array}{c}
-\boldsymbol{L} \\
\boldsymbol{I}
\end{array}] \boldsymbol{J}_{n-k} \\
&=& \mathbf{0}.
\end{eqnarray*}}
\noindent So it is a parity-check matrix of $\mathcal{C},$ and thus,
a generator matrix of the dual code.
\end{proof}

Theorem 1 implies the following duality statement for GTRS codes
with evaluation points forming a multiplicative group, analogy to
TRS codes in \cite{Beelen2}.

\begin{theorem} Let $n, k, \boldsymbol{\alpha}, \boldsymbol{v},
\boldsymbol{t}, \boldsymbol{h}, \boldsymbol{\eta}$ be chosen as in
Definition 2 such that the entries of $\boldsymbol{\alpha}$ form a
multiplicative subgroup of $\mathbb{F}_{q}^{\ast}$. Then $G T R
S_{k, n}[\boldsymbol{\alpha}, \boldsymbol{v}, \boldsymbol{t},
\boldsymbol{h}, \boldsymbol{\eta}]^{\perp_{E}}$ twisted code is
equivalent to a $G T R S_{n-k, n}[\boldsymbol{\alpha},
\boldsymbol{v}^{-1}, k-\boldsymbol{h},
n-k-\boldsymbol{t},-\boldsymbol{\eta}]$ -twisted code.
\end{theorem}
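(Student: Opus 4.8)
The plan is to put a generator matrix of $GTRS_{k,n}[\boldsymbol{\alpha},\boldsymbol{v},\boldsymbol{t},\boldsymbol{h},\boldsymbol{\eta}]$ into exactly the shape required by Theorem~1, push it through that theorem, and then read off the parameters of the resulting dual generator matrix.

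First I would record the matrix form of a GTRS code. The monomial basis of $\mathcal{P}_{k,n}[\boldsymbol{t},\boldsymbol{h},\boldsymbol{\eta}]$ is $\{g_i\}_{i=0}^{k-1}$ with $g_i(x)=x^{i}$ when $i\notin\{h_1,\dots,h_\ell\}$ and $g_{h_j}(x)=x^{h_j}+\eta_j x^{k-1+t_j}$. Collecting the coefficient vectors of $g_0,\dots,g_{k-1}$ as the rows of a $k\times n$ matrix gives $[\boldsymbol{I}_k\mid\boldsymbol{L}]$, where $\boldsymbol{L}\in\mathbb{F}_q^{k\times(n-k)}$ carries $\eta_j$ in the row of $g_{h_j}$ and the column of the monomial $x^{k-1+t_j}$, and zeros elsewhere. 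Since the row of $\boldsymbol{V}_n(\boldsymbol{\alpha})$ of index $m$ is $(\alpha_1^{m},\dots,\alpha_n^{m})$, the matrix $\boldsymbol{G}=[\boldsymbol{I}_k\mid\boldsymbol{L}]\cdot\boldsymbol{V}_n(\boldsymbol{\alpha})\boldsymbol{\Lambda}$ with $\boldsymbol{\Lambda}=\operatorname{diag}(v_1,\dots,v_n)$ is a generator matrix of $GTRS_{k,n}[\boldsymbol{\alpha},\boldsymbol{v},\boldsymbol{t},\boldsymbol{h},\boldsymbol{\eta}]$. As the entries of $\boldsymbol{\alpha}$ form a multiplicative group, Theorem~1 applies and yields the dual generator matrix $\boldsymbol{H}=[\boldsymbol{I}_{n-k}\mid\boldsymbol{L}']\cdot\boldsymbol{V}_n(\boldsymbol{\alpha})\operatorname{diag}(\boldsymbol{\alpha}/n)\boldsymbol{\Lambda}^{-1}$, where $\boldsymbol{L}'=\boldsymbol{J}_{n-k}(-\boldsymbol{L}^{T})\boldsymbol{J}_k$.

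Next I would identify the parameters encoded by $[\boldsymbol{I}_{n-k}\mid\boldsymbol{L}']$. Transposing $\boldsymbol{L}$ sends the entry $\eta_j$ to the position $(t_j,h_j)$, left multiplication by $\boldsymbol{J}_{n-k}$ reverses the row index $t_j\mapsto n-k+1-t_j$, right multiplication by $\boldsymbol{J}_k$ reverses the column index $h_j\mapsto k-h_j$, and the sign change produces $-\eta_j$; hence $\boldsymbol{L}'$ carries $-\eta_j$ in the row corresponding to degree $n-k-t_j$ and the column corresponding to the monomial $x^{\,n-1-h_j}$, zeros elsewhere. Reading the rows of $[\boldsymbol{I}_{n-k}\mid\boldsymbol{L}']$ as coefficient vectors of polynomials of degree less than $n$, the row with leading term $x^{r}$, $0\le r\le n-k-1$, equals $x^{r}$ unless $r=n-k-t_j$, in which case it equals $x^{\,n-k-t_j}-\eta_j\,x^{\,n-1-h_j}$. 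Comparing with the monomial basis of $\mathcal{P}_{n-k,n}[\boldsymbol{t}',\boldsymbol{h}',\boldsymbol{\eta}']$, in which a twist of degree $(n-k)-1+t'_j$ is attached to the basis element of degree $h'_j$, this matches exactly when $\boldsymbol{h}'=n-k-\boldsymbol{t}$, $\boldsymbol{t}'=k-\boldsymbol{h}$ and $\boldsymbol{\eta}'=-\boldsymbol{\eta}$. One also checks admissibility: $k-h_j\in\{1,\dots,k\}$ and $n-k-t_j\in\{0,\dots,n-k-1\}$, both families are distinct because $\boldsymbol{h}$ and $\boldsymbol{t}$ are, and no base column coincides with a twist column. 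Thus $[\boldsymbol{I}_{n-k}\mid\boldsymbol{L}']\cdot\boldsymbol{V}_n(\boldsymbol{\alpha})$ is a generator matrix (with trivial column multipliers) of the GTRS code with parameters $(n-k,\boldsymbol{t}',\boldsymbol{h}',\boldsymbol{\eta}')$.

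Finally I would dispose of the column multipliers. The trailing factor $\operatorname{diag}(\boldsymbol{\alpha}/n)\boldsymbol{\Lambda}^{-1}=\operatorname{diag}(\alpha_1/(nv_1),\dots,\alpha_n/(nv_n))$ has all entries in $\mathbb{F}_q^{\ast}$ because $\gcd(n,q)=1$ and every $\alpha_i\neq 0$; so $\boldsymbol{H}$ generates $GTRS_{n-k,n}[\boldsymbol{\alpha},\boldsymbol{\alpha}/(n\boldsymbol{v}),k-\boldsymbol{h},n-k-\boldsymbol{t},-\boldsymbol{\eta}]$, and rescaling the $i$-th coordinate by the nonzero scalar $nv_i/\alpha_i$ (a monomial equivalence) turns this into $GTRS_{n-k,n}[\boldsymbol{\alpha},\boldsymbol{v}^{-1},k-\boldsymbol{h},n-k-\boldsymbol{t},-\boldsymbol{\eta}]$, which is the assertion. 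I expect the only delicate point to be the index bookkeeping in the middle paragraph: tracking how a twist sitting at $(h_j,t_j)$ — a degree-$(k-1+t_j)$ monomial glued to the degree-$h_j$ basis element — migrates under transposition and the two reversals to a twist at $(n-k-t_j,\,k-h_j)$, and confirming that the new hook and twist vectors stay in their legal ranges and never collide. Everything else is routine matrix algebra built on Theorem~1 and the identity $\alpha_i^{n}=1$.
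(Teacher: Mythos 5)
Your proof is correct and follows essentially the same route as the paper: write the generator matrix as $[\boldsymbol{I}\mid\boldsymbol{L}]\cdot\boldsymbol{V}_{n}(\boldsymbol{\alpha})\boldsymbol{\Lambda}$ with $\boldsymbol{L}_{h_j+1,\,t_j}=\eta_j$, apply Theorem 1, and track each twist entry through the transpose and the two reversal matrices to land at position $(n-k-t_j+1,\,k-h_j)$ with value $-\eta_j$, which is exactly the paper's computation (your added admissibility checks on the new hook and twist vectors are a welcome extra). The only blemishes are cosmetic: after transposing, the entry sits at $(t_j,\,h_j+1)$ rather than $(t_j,\,h_j)$, and rescaling the $i$-th coordinate by $nv_i/\alpha_i$ yields the all-ones column multiplier rather than $\boldsymbol{v}^{-1}$ (use $n/\alpha_i$ for that); neither affects the equivalence claim.
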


\begin{proof} By definition, we claim that a generator matrix of $G T
R S_{k, n}[\boldsymbol{\alpha}, \boldsymbol{v}, \boldsymbol{t},
\boldsymbol{h}, \boldsymbol{\eta}]$ is given by
$\boldsymbol{G}=[\boldsymbol{I} \mid \boldsymbol{L}] \cdot
(\boldsymbol{V} \boldsymbol{\Lambda}),$ where the entries of
$\boldsymbol{L} \in \mathbb{F}_{q}^{k \times (n-k)}$ are of the form
\begin{equation}
\label{eqn_example} \boldsymbol{L}_{i j}=\left\{\begin{array}{ll}
\eta_{\mu}, & \text { if }(i, j)=\left(h_{\mu}+1, t_{\mu}\right), \\
0, & \text { else }
\end{array}\right.
\end{equation}

With the analysis as Theorem 1, a parity check matrix for $G T R
S_{k, n}[\boldsymbol{\alpha}, \boldsymbol{v}, \boldsymbol{t},
\boldsymbol{h}, \boldsymbol{\eta}]$ is:
\begin{equation}
\label{eqn_example} \boldsymbol{H}=[\boldsymbol{I} \mid
\boldsymbol{J}_{n-k} (-\boldsymbol{L}^{T}) \boldsymbol{J}_{k}] \cdot
\boldsymbol{V}_{n}(\boldsymbol{\alpha})
\operatorname{diag}(\boldsymbol{\alpha} /
n)\boldsymbol{\Lambda}^{-1}.
\end{equation}

Hence it is equivalent to a code $\mathcal{C}^{\prime}$ generated by
$\left[\boldsymbol{I} \mid-\boldsymbol{J}_{n-k} \boldsymbol{L}^{T}
\boldsymbol{J}_{k}\right] \cdot
\boldsymbol{V}_{n}(\boldsymbol{\alpha})\boldsymbol{\Lambda}^{-1}$.
As we already know, the entries of $-\boldsymbol{J}_{n-k}
\boldsymbol{L}^{\mathrm{T}} \boldsymbol{J}_{k}$ are of the form
\begin{equation}
\label{eqn_example} (-\boldsymbol{J}_{n-k}
\boldsymbol{L}^{\mathrm{T}}
\boldsymbol{J}_{k})_{i,j}=\left\{\begin{array}{ll}-\eta_{\mu}, & (i,
j)=\left(n-k-t_{\mu}+1, k-h_{\mu}\right), \\ 0, & \text { else.
}\end{array}\right.
\end{equation}

In other words, a twist $x^{h_{\mu}}+\eta_{\mu} x^{k-1+t_{\mu}}$
becomes the twist $x^{n-k-t_{\mu}}+\left(-\eta_{\mu}\right)
x^{n-k-1+\left(k-h_{\mu}\right)}$ in the dual code. Therefore the
code $\mathcal{C}^{\prime}$ is a $[k-\boldsymbol{h},
n-k-\boldsymbol{t},-\boldsymbol{\eta}]$ -twisted code, which proves
the claim.
\end{proof}

\subsection{$(+)$ -generalized twisted Reed-Solomon codes}

Taking $l=1,$ $(t,h) = (1, k-1)$ in Definition 2, Beelen \emph{et
al.} obtain a family code as the $(+)$-twisted Reed-Solomon codes by
employing additive subgroups of $\mathbb{F}_{q}$. We denote
generalization of the class twisted code as $ G T R S_{k,
n}[\boldsymbol{\alpha}, \boldsymbol{v}, 1, k-1, \eta]$.

\begin{lemma} (\cite{Beelen1}) Let $k\leq n \leq q,$
$\boldsymbol{\alpha}=\left(\alpha_{1}, \alpha_{2}, \ldots,
\alpha_{n}\right) \in \mathbb{F}_{q}^{n}$ be pairwise distinct,
$\boldsymbol{v}=\left(v_{1}, v_{2}, \ldots, v_{n}\right) \in
(\mathbb{F}_{q}^{\ast})^{n}$, and $\eta \in \mathbb{F}_{q}^{*}$.
Then the generalized twisted code $ G T R S_{k,
n}[\boldsymbol{\alpha}, \boldsymbol{v}, 1, $ $k-1, \eta]$ is
$\mathrm{MDS}$ if and only if
\begin{equation}
\label{eqn_example} \eta \sum_{i \in \mathcal{I}} \alpha_{i} \neq-1,
\quad \forall ~\mathcal{I} \subseteq\{1, \ldots, n\} \text { s.t.
}|\mathcal{I}|=k.
\end{equation}
\end{lemma}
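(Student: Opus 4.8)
The plan is to analyze when the $(+)$-GTRS code $GTRS_{k,n}[\boldsymbol{\alpha}, \boldsymbol{v}, 1, k-1, \eta]$ fails to be MDS by examining the rank deficiency of its $k\times k$ submatrices. First I would write down an explicit generator matrix: a codeword corresponds to the evaluation vector of $f(x)=\sum_{i=0}^{k-1} f_i x^i + \eta f_{k-1} x^{k}$, so the generator matrix is $\boldsymbol{G} = \boldsymbol{M}\cdot \boldsymbol{V}_n(\boldsymbol{\alpha})\boldsymbol{\Lambda}$, where $\boldsymbol{M}\in\mathbb{F}_q^{k\times n}$ is the matrix whose first $k$ columns form $\boldsymbol{I}_k$ and whose $(k+1)$-st column is $\eta \boldsymbol{e}_k$ (all later columns zero). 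Since the column multipliers $v_i$ are nonzero, they never affect whether a set of columns is dependent, so I can ignore $\boldsymbol{\Lambda}$ and work with $\boldsymbol{M}\cdot\boldsymbol{V}_n(\boldsymbol{\alpha})$.

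Next, fix a subset $\mathcal{I}=\{i_1,\dots,i_k\}\subseteq\{1,\dots,n\}$ of size $k$ and let $\boldsymbol{A}_{\mathcal{I}}$ be the corresponding $k\times k$ submatrix of $\boldsymbol{G}$. Its columns are $\bigl(v_{i}f_0(\alpha_i),\dots\bigr)$ — more concretely, the $j$-th row of $\boldsymbol{A}_{\mathcal{I}}$ is the evaluation of $x^{j-1}$ for $j=1,\dots,k-1$ and the evaluation of $x^{k-1}+\eta x^{k}$ for $j=k$, at the points $\alpha_{i_1},\dots,\alpha_{i_k}$. So $\det \boldsymbol{A}_{\mathcal{I}} = \bigl(\prod_{i\in\mathcal{I}} v_i\bigr)\cdot \det\boldsymbol{W}$, where $\boldsymbol{W}$ is the matrix with rows $1, x, \dots, x^{k-2}, \; x^{k-1}+\eta x^{k}$ evaluated at the $\alpha_{i}$. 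The key computation is to expand $\det\boldsymbol{W}$ by linearity in the last row: $\det\boldsymbol{W} = \det\boldsymbol{W}_0 + \eta\,\det\boldsymbol{W}_1$, where $\boldsymbol{W}_0$ is the ordinary Vandermonde matrix on $\{1,x,\dots,x^{k-1}\}$ and $\boldsymbol{W}_1$ has rows $1,x,\dots,x^{k-2},x^{k}$. The determinant $\det\boldsymbol{W}_0 = \prod_{a<b,\,a,b\in\mathcal{I}}(\alpha_b-\alpha_a)$ is the standard Vandermonde, and $\det\boldsymbol{W}_1$ is a generalized Vandermonde determinant associated to the partition obtained by replacing exponent $k-1$ with $k$; by the theory of Schur polynomials (or a direct cofactor argument), $\det\boldsymbol{W}_1 = \bigl(\sum_{i\in\mathcal{I}}\alpha_i\bigr)\cdot\prod_{a<b}(\alpha_b-\alpha_a)$, since the relevant Schur polynomial is the complete homogeneous symmetric polynomial $h_1 = e_1 = \sum_{i\in\mathcal{I}}\alpha_i$. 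Hence $\det\boldsymbol{A}_{\mathcal{I}} = \bigl(\prod_{i\in\mathcal{I}} v_i\bigr)\Bigl(\prod_{a<b}(\alpha_b-\alpha_a)\Bigr)\bigl(1+\eta\sum_{i\in\mathcal{I}}\alpha_i\bigr)$.

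Finally I would conclude: because the $\alpha_i$ are pairwise distinct and the $v_i$ are nonzero, the first two factors never vanish, so $\boldsymbol{A}_{\mathcal{I}}$ is singular if and only if $1+\eta\sum_{i\in\mathcal{I}}\alpha_i = 0$, i.e. $\eta\sum_{i\in\mathcal{I}}\alpha_i = -1$. The code is MDS precisely when every such $k\times k$ submatrix is nonsingular, which gives exactly the stated condition $\eta\sum_{i\in\mathcal{I}}\alpha_i\neq -1$ for all $\mathcal{I}$ of size $k$. I expect the main obstacle to be the clean evaluation of the generalized Vandermonde determinant $\det\boldsymbol{W}_1$; the honest way is to note it equals $s_{\lambda}(\alpha_{i_1},\dots,\alpha_{i_k})\cdot\prod_{a<b}(\alpha_b-\alpha_a)$ for the partition $\lambda=(1,0,\dots,0)$, whence $s_\lambda = h_1 = \sum \alpha_i$, but if one wants to avoid symmetric-function machinery one can instead expand the last row of $\boldsymbol{W}$ along cofactors and recognize each minor as a Vandermonde with one point deleted, then reassemble — a slightly longer but elementary route.
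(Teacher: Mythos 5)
Your proof is correct. Note first that the paper itself offers no argument for this lemma --- it is quoted verbatim from the cited source [Beelen1] --- so the only comparison available is with the original proof there, which proceeds by counting roots rather than by determinants: a nonzero twisted polynomial $f=\sum_{i=0}^{k-1}f_ix^i+\eta f_{k-1}x^k$ can vanish at $k$ of the evaluation points only if $\deg f=k$, in which case $f=\eta f_{k-1}\prod_{i\in\mathcal{I}}(x-\alpha_i)$, and comparing the coefficient of $x^{k-1}$ forces $\eta\sum_{i\in\mathcal{I}}\alpha_i=-1$; conversely any $\mathcal{I}$ satisfying that equation yields a weight-$(n-k)$ codeword. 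Your route instead verifies nonsingularity of every $k\times k$ submatrix of the generator matrix and evaluates the generalized Vandermonde determinant with exponent set $\{0,1,\dots,k-2,k\}$ as $s_{(1,0,\dots,0)}=e_1=\sum_{i\in\mathcal{I}}\alpha_i$ times the ordinary Vandermonde; the identity and the resulting factorization $\det\boldsymbol{A}_{\mathcal{I}}=\bigl(\prod_{i\in\mathcal{I}}v_i\bigr)\prod_{a<b}(\alpha_{i_b}-\alpha_{i_a})\bigl(1+\eta\sum_{i\in\mathcal{I}}\alpha_i\bigr)$ are both right, and the MDS criterion via invertibility of all $k$-column submatrices is standard. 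The two arguments are essentially dual to one another (the vanishing of your determinant factor is exactly the existence of the short codeword); the root-counting version avoids Schur-polynomial machinery and is marginally more elementary, while your version has the advantage of producing the explicit determinant formula, which is also what underlies Lemma 2 (the NMDS case) later in the paper.
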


Next, we present the sufficient and necessary conditions that
$(+)$-GTRS code is an NMDS code. It is easy to conclude from the
proof process of Lemma 1, so we omit the details.

\begin{lemma} Let $k,n,\boldsymbol{\alpha},\boldsymbol{v},\eta$
be chosen as above. Then $ G T R S_{k, n}[\boldsymbol{\alpha},
\boldsymbol{v}, 1, k-1, \eta]$ is NMDS if and only if
\begin{equation}
\label{eqn_example} \eta \sum_{i \in \mathcal{I}} \alpha_{i} = -1,
\quad \exists ~\mathcal{I} \subseteq\{1, \ldots, n\} \text { s.t.
}|\mathcal{I}|=k.
\end{equation}
\end{lemma}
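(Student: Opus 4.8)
The plan is to follow the proof of Lemma~1 line by line and then strengthen the MDS dichotomy to an NMDS one, using the fact that a $(+)$-GTRS code is always MDS or almost MDS. First I would recall the minor computation behind Lemma~1: after factoring out the column multipliers, a $k\times k$ submatrix of a generator matrix of $GTRS_{k,n}[\boldsymbol\alpha,\boldsymbol v,1,k-1,\eta]$ indexed by a set $\mathcal I$ with $|\mathcal I|=k$ has rows $1,x,\dots,x^{k-2},x^{k-1}+\eta x^{k}$ evaluated on $\{\alpha_i\}_{i\in\mathcal I}$; expanding the last row by multilinearity and using the generalized Vandermonde (Schur) identity exactly as in Lemma~1 shows this minor equals $\bigl(\prod_{i\in\mathcal I}v_i\bigr)\bigl(\prod_{i<j,\ i,j\in\mathcal I}(\alpha_j-\alpha_i)\bigr)\bigl(1+\eta\sum_{i\in\mathcal I}\alpha_i\bigr)$, i.e.\ a nonzero scalar multiple of $1+\eta\sum_{i\in\mathcal I}\alpha_i$. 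Hence this minor vanishes iff $\eta\sum_{i\in\mathcal I}\alpha_i=-1$, which is precisely the failure of the MDS condition of Lemma~1.

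Next I would record two structural facts. (i) Every codeword of $GTRS_{k,n}[\boldsymbol\alpha,\boldsymbol v,1,k-1,\eta]$ is a scaled evaluation vector of a polynomial of degree at most $k$, so a nonzero codeword has at most $k$ zero coordinates; therefore $d\ge n-k$, i.e.\ the code is MDS or AMDS. (ii) The (Euclidean) dual of a $(+)$-GTRS code is again, up to equivalence, a $(+)$-GTRS code: under the multiplicative-group hypothesis this is Theorem~2 with $\boldsymbol h=k-1$, $\boldsymbol t=1$ (so $k-\boldsymbol h=1$ and $n-k-\boldsymbol t=n-k-1$, again the $(+)$-shape but in dimension $n-k$), and in the generality of Lemma~1 it is the check-matrix description of $(+)$-GTRS codes in \cite{Huang}. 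Consequently the dual is also MDS or AMDS, so $d^{\perp}\in\{k,k+1\}$. Combining (i), (ii) and the standard fact that a code is MDS iff its dual is MDS, the code is NMDS — meaning both it and its dual are AMDS — exactly when it is not MDS, which by the minor computation above happens exactly when $\eta\sum_{i\in\mathcal I}\alpha_i=-1$ for some $\mathcal I$ with $|\mathcal I|=k$; and if no such $\mathcal I$ exists the code is MDS, hence not NMDS. This yields the stated equivalence.

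The genuinely routine part is the minor computation, which is verbatim the one in Lemma~1; the only input that needs a word of justification is structural fact (ii), namely that passing to the dual keeps us inside the $(+)$-GTRS family (so that "not MDS" automatically upgrades to "AMDS" for the dual as well). I would expect this to be the one place a reader has to look something up — Theorem~2 or \cite{Huang} — after which the argument is a direct transcription of the proof of Lemma~1, which justifies the authors' choice to omit the details.
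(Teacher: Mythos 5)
Your proposal is correct and is exactly the argument the paper has in mind when it says the claim "is easy to conclude from the proof process of Lemma 1": the minor computation shows MDS fails precisely when some $k$-subset of locators sums to $-\eta^{-1}$, and the dichotomy "not MDS $\Rightarrow$ NMDS" comes from both the code and its dual being MDS-or-AMDS. The only soft spot is your justification of structural fact (ii): the description of the dual as a $(+)$-GTRS code (Theorem 2, or Lemma 3 via \cite{Huang}) is only stated under $\eta\neq -a^{-1}$, and when $1+a\eta=0$ the dual is instead the evaluation code on the degree set $\{0,\ldots,n-k-2,n-k\}$, so it is not literally of the $(+)$-shape. This does not damage the proof, because all you need is $d^{\perp}\ge k$, which holds unconditionally: any $k-1$ columns of the generator matrix are linearly independent already from their first $k-1$ rows (a Vandermonde block on distinct $\alpha_i$), so the dual is always MDS or AMDS regardless of $\eta$. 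With that one-line replacement for (ii), the argument is complete and needs no citation beyond Lemma 1.
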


\begin{remark} It can be drawn that the code $G T R
S_{k,n}[\boldsymbol{\alpha}, \boldsymbol{v}, 1, k-1, \eta]$ is MDS
if $-\eta^{-1}$ cannot be represented as the sum of any $k$
\emph{evaluation points}. Furthermore, $\forall   ~\eta \in
\mathbb{F}_{q}^{*}$, $G T R S_{k, n}[\boldsymbol{\alpha},
\boldsymbol{v}, 1, k-1, \eta]$ is either MDS or NMDS.
\end{remark}

\subsection{Hermitian self-dual $(+)$-GTRS codes}

From now on, we always assume that $\omega$ is a primitive element
of $\mathbb{F}_{q^{2}}$, that is
$\mathbb{F}_{q^{2}}^{*}=\langle\omega\rangle$, and label the
elements of $\mathbb{F}_{q}$ as $\mathbb{F}_{q} =\left\{a_{1},
a_{2}, \ldots, a_{q}\right\}$.

Meanwhile, we also always denote $\boldsymbol{u}=\left(u_{1}, u_{2},
\ldots, u_{n}\right)$, where
\begin{equation}
\label{eqn_example} u_{i}:=\prod_{1 \leq j \leq n, j \neq
i}\left(\alpha_{i}-\alpha_{j}\right)^{-1}, ~1\leq i \leq n,
\end{equation}

\noindent and
\begin{equation}
\label{eqn_example} a=\sum_{i=1}^{n} \alpha_{i}.
\end{equation}

Next according to the check matrix of $G T R S_{k,
n}[\boldsymbol{\alpha}, \boldsymbol{v}, 1, k-1, \eta]$ in
\cite{Huang}, we present the following lemma.

\begin{lemma} Let $k\leq n \leq q^2,$
$\boldsymbol{\alpha}=\left(\alpha_{1}, \alpha_{2}, \ldots,
\alpha_{n}\right) \in \mathbb{F}_{q^2}^{n}$ be pairwise distinct,
$\boldsymbol{v}=\left(v_{1}, v_{2}, \ldots, v_{n}\right)$ $ \in
(\mathbb{F}_{q^2}^{\ast})^{n}$, and $\eta \in \mathbb{F}_{q^2}^{*}$.
Then the Euclidean dual of twisted code $G T R S_{k,n}$
$[\boldsymbol{\alpha}, \boldsymbol{v}, 1, k-1, \eta](\eta
\neq-a^{-1})$ is represented as follows.
\begin{eqnarray}
& & G T R S_{k, n}^{\perp_{E}}[\boldsymbol{\alpha}, \boldsymbol{v},
1, k-1, \eta]\nonumber\\
&=& G T R S_{n-k, n}[\boldsymbol{\alpha},
\boldsymbol{u}\boldsymbol{v}^{-1}, 1, n-k-1, -\frac{\eta}{1+a \eta}]
\nonumber.
\end{eqnarray}
\end{lemma}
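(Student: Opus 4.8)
The plan is to establish the identity directly, by writing codewords of both sides in evaluation form, checking Euclidean orthogonality, and then finishing with a dimension count; unlike Theorem~2 this argument does not need the evaluation points to form a group, only the shape of the single $(+)$-twist. Set $\eta':=-\eta/(1+a\eta)$, which lies in $\mathbb{F}_{q^2}^{*}$ precisely because $\eta\neq 0$ and, by the hypothesis $\eta\neq -a^{-1}$, $1+a\eta\neq 0$. A generic codeword of $GTRS_{k,n}[\boldsymbol{\alpha},\boldsymbol{v},1,k-1,\eta]$ has the form $\bigl(v_1f(\alpha_1),\dots,v_nf(\alpha_n)\bigr)$ with $f=\sum_{i=0}^{k-1}f_ix^i+\eta f_{k-1}x^{k}$, and a generic codeword of $GTRS_{n-k,n}[\boldsymbol{\alpha},\boldsymbol{u}\boldsymbol{v}^{-1},1,n-k-1,\eta']$ has the form $\bigl(u_1v_1^{-1}g(\alpha_1),\dots,u_nv_n^{-1}g(\alpha_n)\bigr)$ with $g=\sum_{j=0}^{n-k-1}g_jx^j+\eta'g_{n-k-1}x^{n-k}$. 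Since their Euclidean inner product is $\sum_{i=1}^{n}v_if(\alpha_i)\,u_iv_i^{-1}g(\alpha_i)=\sum_{i=1}^{n}u_i\,(fg)(\alpha_i)$, the whole problem collapses to showing that this last sum vanishes for every admissible pair $(f,g)$.

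The key tool is the standard family of power-sum identities for the Lagrange coefficients $u_i=\prod_{j\neq i}(\alpha_i-\alpha_j)^{-1}$, namely $\sum_{i=1}^{n}u_i\alpha_i^{m}=0$ for $0\le m\le n-2$, $\sum_{i=1}^{n}u_i\alpha_i^{n-1}=1$, and $\sum_{i=1}^{n}u_i\alpha_i^{n}=\sum_{i=1}^{n}\alpha_i=a$; all three drop out of the partial-fraction expansion of $x^{m}/\prod_i(x-\alpha_i)$ together with its behaviour at infinity (equivalently, the corresponding symmetric combinations of the $u_i$ are the complete homogeneous symmetric functions $h_0=1$ and $h_1=a$). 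Because $\deg f\le k$ and $\deg g\le n-k$ force $\deg(fg)\le n$, these identities reduce $\sum_iu_i(fg)(\alpha_i)$ to just the coefficient of $x^{n-1}$ in $fg$ (multiplied by $1$) plus the coefficient of $x^{n}$ in $fg$ (multiplied by $a$).

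It remains to read off those two coefficients from the rigid shape of $f$ and $g$. Degree $n$ in $fg$ is produced only by $(\eta f_{k-1}x^{k})(\eta'g_{n-k-1}x^{n-k})$, so the $x^{n}$-coefficient of $fg$ equals $\eta\eta'f_{k-1}g_{n-k-1}$; degree $n-1$ is produced only by $(\eta f_{k-1}x^{k})(g_{n-k-1}x^{n-k-1})$ and $(f_{k-1}x^{k-1})(\eta'g_{n-k-1}x^{n-k})$, so the $x^{n-1}$-coefficient equals $(\eta+\eta')f_{k-1}g_{n-k-1}$. Hence $\sum_{i=1}^{n}u_i(fg)(\alpha_i)=f_{k-1}g_{n-k-1}\bigl((\eta+\eta')+a\eta\eta'\bigr)$, which vanishes for all $f_{k-1},g_{n-k-1}$ exactly when $\eta+\eta'(1+a\eta)=0$, i.e. when $\eta'=-\eta/(1+a\eta)$ — precisely the twist coefficient appearing in the statement. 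Therefore $GTRS_{n-k,n}[\boldsymbol{\alpha},\boldsymbol{u}\boldsymbol{v}^{-1},1,n-k-1,\eta']\subseteq GTRS_{k,n}[\boldsymbol{\alpha},\boldsymbol{v},1,k-1,\eta]^{\perp_E}$, and since the left-hand code has dimension $n-k$ while the Euclidean dual on the right also has dimension $n-k$ (the original code being $k$-dimensional), the inclusion is an equality.

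The step I expect to be the main obstacle is the careful bookkeeping in the degree-$(n-1)$ and degree-$n$ coefficient extraction: one must use that $f$ and $g$ are not arbitrary polynomials of those degrees but each carry a single minimal $(+)$-twist, so their top coefficients are forced to be $\eta f_{k-1}$ and $\eta'g_{n-k-1}$, and one must check that no further cross-terms of $fg$ reach degree $n-1$ or $n$ — this is exactly where the choice $t=1$ (rather than a larger or multiple twist) enters. A secondary point to pin down cleanly is the identity $\sum_iu_i\alpha_i^{n}=a$ and the verification that the dual column multipliers are indeed the componentwise product $\boldsymbol{u}\boldsymbol{v}^{-1}$ with $\boldsymbol{u}$ normalised as in the excerpt; alternatively one can bypass the coefficient computation entirely by starting from the explicit parity-check matrix of $GTRS_{k,n}[\boldsymbol{\alpha},\boldsymbol{v},1,k-1,\eta]$ recorded in \cite{Huang} and recognising it, after a diagonal row scaling, as a generator matrix of the asserted $(+)$-GTRS code.
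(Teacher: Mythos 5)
Your proof is correct, and it takes a genuinely different route from the paper: the paper gives no self-contained argument for this lemma at all, deriving it instead from the explicit parity-check matrix of $(+)$-GTRS codes recorded in \cite{Huang} (the alternative you mention in your closing sentence is essentially what the authors do). Your direct verification via the power-sum identities $\sum_{i}u_i\alpha_i^{m}=0$ for $0\le m\le n-2$, $\sum_i u_i\alpha_i^{n-1}=1$ and $\sum_i u_i\alpha_i^{n}=a$ is sound: the coefficient extraction at degrees $n-1$ and $n$ is right, the orthogonality condition collapses to $(\eta+\eta')+a\eta\eta'=0$, which is exactly $\eta'=-\eta/(1+a\eta)$, and the hypothesis $\eta\neq -a^{-1}$ is used precisely where it should be, to make $\eta'$ a well-defined nonzero twist coefficient. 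The one step you state without comment is that $GTRS_{n-k,n}[\boldsymbol{\alpha},\boldsymbol{u}\boldsymbol{v}^{-1},1,n-k-1,\eta']$ really has dimension $n-k$, which is needed for the final dimension count; this follows because a nonzero twisted polynomial in $\mathcal{P}_{n-k,n}$ has degree at most $n-k\le n-1$ and so cannot vanish at all $n$ distinct evaluation points, but it is worth saying explicitly. What your approach buys is independence from \cite{Huang} and transparency about exactly where the single minimal twist $t=1$ and the constant $a=\sum_i\alpha_i$ enter; what the paper's approach buys is brevity and a reusable matrix identity that also feeds into its Theorem~4.
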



\begin{remark} In Theorem 2, suppose that
$\boldsymbol{\alpha}$ form a multiplicative subgroup of
$\mathbb{F}_{q^{2}}$, then $a=\sum_{i=1}^{n} \alpha_{i}=0$, and set
$l=1,$ $(t, h) = (1, k - 1)$, then $ G T R S_{k,
n}^{\perp_{E}}[\boldsymbol{\alpha}, \boldsymbol{v}, 1, k-1, \eta]= G
T R S_{n-k, n}$ $[\boldsymbol{\alpha},
\boldsymbol{u}\boldsymbol{v}^{-1}, 1, n-k-1, -\eta].$ Thus the
result of Lemma 3 is a special case of Theorem 2 and vice versa.
\end{remark}

According to Lemma 3, we obtain the corollary as follows.

\begin{cor} Let $\mathbf{1}$ be all-one word of length $n.$ Then the Euclidean
dual code of $G T R S_{k, n}[\boldsymbol{\alpha}, \boldsymbol{1},
1,$ $ k-1, \eta](\eta \neq-a^{-1})$ is
\begin{eqnarray}
& & G T R S_{k,n}^{\perp_{E}}[\boldsymbol{\alpha}, \boldsymbol{1}, 1, k-1, \eta]\nonumber\\
&=& G T R S_{n-k, n}[\boldsymbol{\alpha},
\boldsymbol{u}, 1, n-k-1, -\frac{\eta}{1+a \eta}] \nonumber \\
&=& \{(u_{1} g(\alpha_{1}), \ldots, u_{n} g(\alpha_{n}))|g(x) \in
\mathbb{F}_{q^{2}}[x]\} \nonumber,
\end{eqnarray}
%
%
%
%
%
%
\noindent where $g(x)=\sum_{i=0}^{n-k-2}
g_{i}x^{i}+g_{n-k-1}(x^{n-k-1} -\frac{\eta}{1+a \eta} x^{n-k})$,
$g_{i} \in \mathbb{F}_{q^2}, 0\leq i \leq n-k-1$ with $ g_{n-k-1}
\neq 0$.
\end{cor}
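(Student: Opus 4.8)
The plan is to obtain the corollary as the special case $\boldsymbol{v}=\boldsymbol{1}$ of Lemma 3, followed by a direct unwinding of Definitions 1 and 2 on the resulting right-hand side.

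First I would put $\boldsymbol{v}=\boldsymbol{1}$ in Lemma 3. Since $\boldsymbol{1}^{-1}=\boldsymbol{1}$ we have $\boldsymbol{u}\boldsymbol{v}^{-1}=\boldsymbol{u}$, and the standing hypothesis $\eta\neq-a^{-1}$ is exactly what Lemma 3 requires, so
\[
G T R S_{k,n}^{\perp_{E}}[\boldsymbol{\alpha},\boldsymbol{1},1,k-1,\eta]
= G T R S_{n-k,n}\left[\boldsymbol{\alpha},\boldsymbol{u},1,n-k-1,-\frac{\eta}{1+a\eta}\right],
\]
which is the first displayed equality. I would also record at this point that $\eta':=-\eta/(1+a\eta)$ lies in $\mathbb{F}_{q^{2}}^{*}$: it is nonzero because $\eta\neq0$, and its denominator does not vanish because $\eta\neq-a^{-1}$ (when $a=0$ this condition is vacuous and $\eta'=-\eta$); hence the code on the right is a bona fide $(+)$-GTRS code of dimension $n-k$, consistent with $\dim\mathcal{C}^{\perp_{E}}=n-k$.

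Next I would expand that code through the definitions. By Definition 2, with the parameters $k,\boldsymbol{v},\eta$ there replaced by $n-k,\boldsymbol{u},\eta'$,
\[
G T R S_{n-k,n}[\boldsymbol{\alpha},\boldsymbol{u},1,n-k-1,\eta']
=\{(u_{1}g(\alpha_{1}),\ldots,u_{n}g(\alpha_{n})):g\in\mathcal{P}_{n-k,n}[1,n-k-1,\eta']\}.
\]
By Definition 1, with single twist $t_{1}=1$ and single hook $h_{1}=n-k-1$ (so that the twist monomial is $x^{(n-k)-1+1}=x^{n-k}$),
\[
\mathcal{P}_{n-k,n}[1,n-k-1,\eta']=\left\{\sum_{i=0}^{n-k-1}g_{i}x^{i}+\eta'g_{n-k-1}x^{n-k}:g_{i}\in\mathbb{F}_{q^{2}}\right\}.
\]
Pulling the hook term $g_{n-k-1}x^{n-k-1}$ out of the first sum, a generic element $g$ has the form $g(x)=\sum_{i=0}^{n-k-2}g_{i}x^{i}+g_{n-k-1}(x^{n-k-1}+\eta'x^{n-k})=\sum_{i=0}^{n-k-2}g_{i}x^{i}+g_{n-k-1}(x^{n-k-1}-\frac{\eta}{1+a\eta}x^{n-k})$, with the $g_{i}$ in $\mathbb{F}_{q^{2}}$; this matches the form of $g$ given in the statement. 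Substituting this back yields the second equality and finishes the argument.

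There is essentially no obstacle here: the proof is a pure specialization of Lemma 3 plus a rewrite of the definitions. The only step needing attention is the index bookkeeping once the dimension parameter becomes $n-k$ — one has to verify that twist $t_{1}=1$ and hook $h_{1}=n-k-1$ produce precisely the monomials $x^{n-k-1}$ and $x^{n-k}$, and that the dual dimension is $n-k$ — but once this is settled every remaining step is mechanical.
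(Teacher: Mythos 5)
Your proposal is correct and matches the paper's (implicit) argument exactly: the paper derives this corollary simply as the specialization $\boldsymbol{v}=\boldsymbol{1}$ of Lemma 3, and your unwinding of Definitions 1 and 2 to obtain the explicit polynomial form, including the index check that twist $t_{1}=1$ and hook $h_{1}=n-k-1$ give the monomials $x^{n-k-1}$ and $x^{n-k}$, is the intended routine verification.
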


In the following, we show that the necessary and sufficient
conditions for (+)-GTRS codes being Hermitian self-dual.

\begin{theorem} Keep the above notations, let $n=2k$, then $ G T R
S_{k, n}[\boldsymbol{\alpha}, \boldsymbol{v}, 1, k-1, \eta](\eta
\neq-a^{-1})$ over $\mathbb{F}_{q^{2}}$ is Hermitian self-dual if
and only if there exists a polynomial $g(x)=\sum_{i=0}^{k-2}
g_{i}x^{i}+g_{k-1}(x^{k-1}-\frac{\eta}{1+a \eta} x^{k})$, $g_{i} \in
\mathbb{F}_{q^2}, 0\leq i \leq k-1$ with $ g_{k-1} \neq 0$ such that
%
\begin{equation}
\label{eqn_example} v_{i}^{q+1}f^{q}(\alpha_{i})=u_{i}g(\alpha_{i}),
1\leq i \leq n.
\end{equation}
\end{theorem}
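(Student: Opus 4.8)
I would prove the theorem by reducing Hermitian self-duality of $\mathcal{C}:=GTRS_{k,n}[\boldsymbol{\alpha},\boldsymbol{v},1,k-1,\eta]$ to an identity between the \emph{Euclidean} dual $\mathcal{C}^{\perp_E}$ (which Lemma~3 and Corollary~1 already exhibit as an explicit $(+)$-GTRS code) and the coordinatewise conjugate of $\mathcal{C}$. For $\mathbf{x}=(x_1,\dots,x_n)\in\mathbb{F}_{q^2}^n$ set $\overline{\mathbf{x}}=(x_1^q,\dots,x_n^q)$ and $\overline{\mathcal{C}}=\{\overline{\mathbf{x}}:\mathbf{x}\in\mathcal{C}\}$. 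The first step is the elementary observation that $\langle\mathbf{x},\mathbf{y}\rangle_H=\langle\mathbf{x},\overline{\mathbf{y}}\rangle_E$ together with the fact that $x\mapsto x^q$ is an additive involution of $\mathbb{F}_{q^2}$; these give $\mathcal{C}^{\perp_H}=(\overline{\mathcal{C}})^{\perp_E}=\overline{\mathcal{C}^{\perp_E}}$ and $\dim\overline{\mathcal{C}}=\dim\mathcal{C}=k$. Applying the involution $\overline{\,\cdot\,}$ to $\mathcal{C}=\mathcal{C}^{\perp_H}=\overline{\mathcal{C}^{\perp_E}}$ then shows that $\mathcal{C}$ is Hermitian self-dual if and only if $\overline{\mathcal{C}}=\mathcal{C}^{\perp_E}$ (and with $n=2k$ the two sides already have the same dimension $k$, so it suffices to get $\overline{\mathcal{C}}\subseteq\mathcal{C}^{\perp_E}$).

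The second step is to make both sides explicit and compare codewords. By Lemma~3, since $\eta\neq-a^{-1}$ and $n-k=k$, $\mathcal{C}^{\perp_E}=GTRS_{k,n}[\boldsymbol{\alpha},\boldsymbol{u}\boldsymbol{v}^{-1},1,k-1,-\frac{\eta}{1+a\eta}]$, so (as in Corollary~1, with the column multipliers $\boldsymbol{u}$ replaced by $\boldsymbol{u}\boldsymbol{v}^{-1}$) its codewords are exactly the vectors $\left(u_iv_i^{-1}g(\alpha_i)\right)_{i=1}^n$ with $g(x)=\sum_{i=0}^{k-2}g_ix^i+g_{k-1}\left(x^{k-1}-\frac{\eta}{1+a\eta}x^{k}\right)$, $g_i\in\mathbb{F}_{q^2}$. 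Meanwhile the codewords of $\overline{\mathcal{C}}$ are the vectors $\left(v_i^q f(\alpha_i)^q\right)_{i=1}^n$ with $f\in\mathcal{P}_{k,n}[1,k-1,\eta]$. Thus $\overline{\mathcal{C}}=\mathcal{C}^{\perp_E}$ holds exactly when these two families coincide; multiplying the coordinatewise equality $v_i^q f(\alpha_i)^q=u_iv_i^{-1}g(\alpha_i)$ by $v_i$ (and writing $f^q(\alpha_i):=f(\alpha_i)^q$) turns it into the displayed system
\[
v_i^{q+1}f^q(\alpha_i)=u_ig(\alpha_i),\qquad 1\le i\le n,
\]
to hold for every $f\in\mathcal{P}_{k,n}[1,k-1,\eta]$ with a corresponding $g\in\mathcal{P}_{k,n}[1,k-1,-\frac{\eta}{1+a\eta}]$; the condition $g_{k-1}\neq0$ in the statement records the matching of the twisted generator $x^{k-1}+\eta x^k$ of $\mathcal{C}$. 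Since conjugation spans: $\overline{\mathcal{C}}=\mathrm{span}_{\mathbb{F}_{q^2}}\{\overline{\mathbf{c}}:\mathbf{c}\in\text{basis of }\mathcal{C}\}$, it is enough to produce a matching $g$ for each of the $k$ basis polynomials $f\in\{1,x,\dots,x^{k-2},x^{k-1}+\eta x^k\}$, and then $\overline{\mathcal{C}}\subseteq\mathcal{C}^{\perp_E}$, hence equality; the resulting assignment $f\mapsto g$ is the $\mathbb{F}_{q^2}$-semilinear ``rescale by $v_i^{q+1}/u_i$ and interpolate'' bijection between the two $k$-dimensional twisted polynomial spaces.

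The step I expect to be the real work is the interpolation compatibility inside the second step: one must check that the vector $\left(v_i^{q+1}f(\alpha_i)^q/u_i\right)_{i=1}^n$ is genuinely interpolated by a polynomial lying in \emph{precisely} $\mathcal{P}_{k,n}[1,k-1,-\frac{\eta}{1+a\eta}]$ — equivalently, that raising the $(+)$-twisted evaluation code to the $q$-th power coordinatewise reproduces exactly the dual twist coefficient $-\eta/(1+a\eta)$ and introduces no other degree-$k$ term. This is the substance of the $(+)$-GTRS Euclidean duality of Lemma~3 (where the hypothesis $\eta\neq-a^{-1}$ is exactly what makes $1+a\eta$ invertible), combined with the arithmetic fact $v_i^{q^2+q}=v_i^{q+1}$ in $\mathbb{F}_{q^2}$, which confirms that the exponent $q+1$ in the statement is the correct normalization. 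Granting this, both directions follow at once from the equivalence of the first step: if such $g$'s exist then $\overline{\mathcal{C}}\subseteq\mathcal{C}^{\perp_E}$ and hence $\overline{\mathcal{C}}=\mathcal{C}^{\perp_E}$, and conversely Hermitian self-duality forces every conjugated codeword of $\mathcal{C}$ into $\mathcal{C}^{\perp_E}$, which is precisely the existence of the required $g$ for each $f$.
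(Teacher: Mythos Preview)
Your argument is correct and is essentially the paper's own proof, phrased through the conjugate-code identity $\mathcal{C}^{\perp_H}=\overline{\mathcal{C}^{\perp_E}}$ rather than through a direct generator-matrix computation. The paper writes $G_k(\boldsymbol{\alpha},\boldsymbol{v},\eta)=G_k(\boldsymbol{\alpha},\boldsymbol{1},\eta)\Lambda$, observes that $\mathbf{c}^q\cdot G_k(\boldsymbol{\alpha},\boldsymbol{v},\eta)^T=0$ is exactly $(v_i^{q+1}f^q(\alpha_i))_i\in GTRS_{k,n}[\boldsymbol{\alpha},\boldsymbol{1},1,k-1,\eta]^{\perp_E}$, and then invokes the $\boldsymbol{v}=\boldsymbol{1}$ case of the Euclidean-dual description (their Corollary); you instead apply Lemma~3 to $\mathcal{C}$ with its full multiplier vector $\boldsymbol{v}$, obtain multipliers $\boldsymbol{u}\boldsymbol{v}^{-1}$ on the dual, and clear the $v_i^{-1}$ by multiplying through. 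The two routes differ only in where the factor $v_i$ is absorbed and are otherwise identical.

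One remark: the paragraph you label ``the real work''---checking that the interpolant of $(v_i^{q+1}f(\alpha_i)^q/u_i)_i$ actually lands in $\mathcal{P}_{k,n}[1,k-1,-\eta/(1+a\eta)]$---is \emph{not} part of proving this theorem. The theorem is a pure equivalence: Hermitian self-duality holds \emph{if and only if} such $g$'s exist, and your first two steps already establish that. Verifying that the interpolation compatibility actually holds for specific choices of $\boldsymbol{\alpha},\boldsymbol{v},\eta$ is precisely the content of the paper's subsequent Theorems~5 and~6, not of this one.
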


\begin{proof} Note that $G T R S_{k, n}[\boldsymbol{\alpha},
\boldsymbol{v}, 1, k-1, \eta]$ has a generator matrix given by
$G_{k}(\boldsymbol{\alpha}, \boldsymbol{v},\eta)$. Clearly, we have
$G_{k}(\boldsymbol{\alpha},
\boldsymbol{v},\eta)=G_{k}(\boldsymbol{\alpha}, \boldsymbol{1},\eta)
\Lambda$, where

$$
G_{k}(\boldsymbol{\alpha},
\boldsymbol{1},\eta)=\left(\begin{array}{cccc}
1 & 1 & \cdots & 1 \\
\alpha_{1} & \alpha_{2} & \cdots & \alpha_{n} \\
\vdots & \vdots & \ddots & \vdots \\
\alpha_{1}^{k-2} & \alpha_{2}^{k-2} & \cdots & \alpha_{n}^{k-2}\\
\alpha_{1}^{k-1}+\eta \alpha_{1}^{k} & \alpha_{2}^{k-1}+\eta
\alpha_{2}^{k} & \cdots &
\alpha_{n}^{k-1}+\eta \alpha_{n}^{k} \\
\end{array}\right),
$$

\noindent and $\Lambda$ is the diagonal matrix
$\operatorname{diag}\left(v_{1}, v_{2}, \ldots, v_{n}\right)$.
It follows that $G T R S_{\frac{n}{2}, n}[\boldsymbol{\alpha},
\boldsymbol{v}, 1, k-1, \eta]$ over $\mathbb{F}_{q^{2}}$ is
Hermitian self-dual if and only if for any codeword
$\mathbf{c}=(v_{1} f(\alpha_{1}), $ $v_{2} f(\alpha_{2}), \ldots, $
$v_{n} f(\alpha_{n}))$ of $G T R S_{\frac{n}{2},
n}[\boldsymbol{\alpha}, \boldsymbol{v}, 1, k-1, \eta]$,
\begin{eqnarray}
& & {\mathbf{c}^{q}} \cdot G_{\frac{n}{2}}(\boldsymbol{\alpha},
\boldsymbol{v},\eta)^{T} \nonumber\\
&=& {\mathbf{c}^{q}} \cdot (G_{\frac{n}{2}}(\boldsymbol{\alpha},
\boldsymbol{1},\eta) \Lambda) ^{T} \nonumber \\
&=& (v_{1}^{q+1} f^{q}(\alpha_{1}), \ldots, v_{n}^{q+1}
f^{q}(\alpha_{n})) \cdot  G_{\frac{n}{2}}(\boldsymbol{\alpha},
\boldsymbol{1},\eta)^{T} \nonumber\\
&=& \mathbf{0}\nonumber \\
&\Leftrightarrow & (v_{1}^{q+1} f^{q}(\alpha_{1}), \ldots,
v_{n}^{q+1} f^{q}(\alpha_{n})) \in G T R S_{\frac{n}{2},
n}^{\perp_{E}}[\boldsymbol{\alpha}, \boldsymbol{1}, 1, k-1,
\eta]\nonumber.
\end{eqnarray}


\noindent Recall that the Euclidean dual of $G T R S_{\frac{n}{2},
n}[\boldsymbol{\alpha}, \boldsymbol{1}, 1, k-1, \eta]$ is $G T R
S_{\frac{n}{2}, n}[\boldsymbol{\alpha}, \boldsymbol{u}, 1, k-1,
-\frac{\eta}{1+a \eta}]$, now the desired result follows immediately
from Corollary 3.
\end{proof}

\subsection{Hermitian self-dual MDS and NMDS codes}

In this section, we mainly present our contribution to construct
several classes of Hermitian self-dual MDS and NMDS codes. To do
that, we consider the Hermitian self-dual (+)-GTRS codes in Theorem
4. We first give the following basic lemmas from \cite{Mullen}.

\begin{lemma} If $\omega$ is a primitive element of
$\mathbb{F}_{q^{2}}$, then there exists a $\xi \in
\mathbb{F}_{q^{2}}$ such that $\omega^q+\omega=\xi^{q+1}$, that is
$\omega^q+\omega \in \mathbb{F}_{q}$.
\end{lemma}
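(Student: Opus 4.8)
The plan is to exhibit $\omega^q+\omega$ as a norm from $\mathbb{F}_{q^2}$ to $\mathbb{F}_q$, which by transitivity of the norm is equivalent to showing $\omega^q+\omega\in\mathbb{F}_q$; the latter is the genuinely simple fact, while producing $\xi$ with $\xi^{q+1}=\omega^q+\omega$ is then automatic. So first I would observe that the map $x\mapsto x+x^q$ is the relative trace $\operatorname{Tr}_{\mathbb{F}_{q^2}/\mathbb{F}_q}$, hence $\operatorname{Tr}(\omega)=\omega+\omega^q$ is fixed by the Frobenius $x\mapsto x^q$: indeed $(\omega+\omega^q)^q=\omega^q+\omega^{q^2}=\omega^q+\omega$ since $\omega^{q^2}=\omega$. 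Therefore $\omega^q+\omega\in\mathbb{F}_q$.

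Next I would pass from "lies in $\mathbb{F}_q$" to "is a norm". The norm map $N\colon\mathbb{F}_{q^2}^{*}\to\mathbb{F}_q^{*}$, $N(x)=x^{q+1}$, is a group homomorphism whose image has size $(q^2-1)/\gcd(q+1,q^2-1)=(q^2-1)/(q+1)=q-1$, so $N$ is surjective onto $\mathbb{F}_q^{*}$; and $N(0)=0$ covers the remaining value. Hence \emph{every} element of $\mathbb{F}_q$ is of the form $\xi^{q+1}$ for some $\xi\in\mathbb{F}_{q^2}$, and in particular there exists $\xi$ with $\xi^{q+1}=\omega^q+\omega$. One subtlety worth a sentence: if $\operatorname{char}\mathbb{F}_q=2$ then the argument above still gives $\omega+\omega^q\in\mathbb{F}_q$ directly, and the norm surjectivity is unaffected, so no case split is really needed; if $\omega+\omega^q=0$ one simply takes $\xi=0$.

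The only point requiring a little care — the "main obstacle" such as it is — is the surjectivity of the norm, i.e. justifying that the image of $x\mapsto x^{q+1}$ on $\mathbb{F}_{q^2}^{*}$ is all of $\mathbb{F}_q^{*}$ rather than a proper subgroup. I would handle this by the kernel count just given (the kernel is the set of $(q+1)$-th roots of unity in $\mathbb{F}_{q^2}^{*}$, which has exactly $q+1$ elements since $q+1\mid q^2-1$), so the image has order exactly $q-1$ and, being a subgroup of the cyclic group $\mathbb{F}_q^{*}$ of order $q-1$, must equal it. Alternatively one can cite the standard fact (e.g. from \cite{Mullen}) that $\operatorname{Tr}$ and $N$ between finite fields are both surjective. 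Assembling these pieces: $\omega+\omega^q=\operatorname{Tr}_{\mathbb{F}_{q^2}/\mathbb{F}_q}(\omega)\in\mathbb{F}_q=N(\mathbb{F}_{q^2})$, so the desired $\xi$ exists, completing the proof.
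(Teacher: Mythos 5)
Your proof is correct and follows the same core argument as the paper: both establish $\omega^q+\omega\in\mathbb{F}_q$ by observing that it is fixed by the Frobenius $x\mapsto x^q$ (the paper writes this as $(\omega^q+\omega)^{q-1}=1$). You go slightly further than the paper by explicitly verifying surjectivity of the norm map $x\mapsto x^{q+1}$ onto $\mathbb{F}_q$ to produce $\xi$, a step the paper leaves implicit in its ``that is''; this is a worthwhile completion rather than a deviation.
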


\begin{proof} Since $(\omega^q+\omega)^q=\omega^{q^2}+\omega^q=\omega+\omega^q$, that is
$(\omega^q+\omega)^{q-1}=1$, then it is a straight-forward fact that
$\omega^q+\omega \in \mathbb{F}_{q}$.
\end{proof}


\begin{lemma} The equation $ \zeta^{q}+\zeta^{q-1}+1=0$ with
regard to $\zeta$ has $q$ distinct nonzero roots over the finite
field $\mathbb{F}_{q^{2}}$.
\end{lemma}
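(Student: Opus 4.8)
The plan is to show that the degree-$q$ polynomial $f(\zeta) = \zeta^{q} + \zeta^{q-1} + 1 \in \mathbb{F}_{q}[\zeta]$ is separable and splits completely over $\mathbb{F}_{q^{2}}$; since $f(0) = 1$, its $q$ roots are then automatically distinct and nonzero, which is exactly the assertion.

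First I would dispose of the degenerate values. Clearly $\zeta = 0$ is not a root, and for $\zeta \neq 0$ the equation $f(\zeta) = 0$ is equivalent to $\zeta^{q-1}(\zeta+1) = -1$; in particular $\zeta = -1$ is not a root. Hence every root $\zeta$ satisfies $\zeta \notin \{0,-1\}$ and
\[
\zeta^{q} = \zeta\cdot\zeta^{q-1} = \frac{-\zeta}{\zeta+1}.
\]

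Next, separability. Writing $q$ as a power of the characteristic $p$, the formal derivative is $f'(\zeta) = q\zeta^{q-1} + (q-1)\zeta^{q-2} = -\zeta^{q-2}$ in $\mathbb{F}_{q}[\zeta]$, whose only root is $\zeta = 0$. Since $f(0) = 1 \neq 0$ we get $\gcd(f,f') = 1$, so $f$ has exactly $q = \deg f$ distinct roots in $\overline{\mathbb{F}_{q}}$. This step is uniform whether $q$ is even or odd.

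Finally, I would locate the roots. Applying the Frobenius $x \mapsto x^{q}$ to the displayed identity — which is legitimate because $\zeta \neq -1$ forces $\zeta^{q} \neq -1$, so the denominator remains nonzero — gives $\zeta^{q^{2}} = (\zeta^{q})^{q} = -\zeta^{q}/(\zeta^{q}+1)$, and substituting $\zeta^{q} = -\zeta/(\zeta+1)$ and simplifying yields $\zeta^{q^{2}} = \zeta$. Thus every root lies in $\mathbb{F}_{q^{2}}$, and combined with the count of $q$ distinct roots and $f(0) = 1$ this completes the proof. The point to get right is the bookkeeping around the excluded values $0$ and $-1$ (and the check $\zeta^{q} \neq -1$) in characteristic $2$ versus odd characteristic; the substitution $\zeta^{q} = -\zeta/(\zeta+1)$, which one recognizes as a linear fractional involution of $\mathbb{P}^{1}$ and which therefore automatically satisfies $\zeta^{q^{2}} = \zeta$, is the crux of the argument.
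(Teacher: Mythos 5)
Your proof is correct and complete. Note that the paper itself gives no argument for this lemma at all: it is quoted as a ``basic lemma'' from the Handbook of Finite Fields \cite{Mullen}, so your write-up is a genuine addition rather than a rederivation. Your three steps all check out: the reformulation $\zeta^{q-1}(\zeta+1)=-1$ correctly excludes $0$ and $-1$ (also in characteristic $2$, where $-1=1$ and $f(1)=1$); the derivative computation $f'=-\zeta^{q-2}$ gives $\gcd(f,f')=1$ since $f(0)=1$ (the only cosmetic slip is for $q=2$, where $f'=-1$ is a nonzero constant rather than a polynomial with root $0$, but the coprimality conclusion is unaffected); and the Frobenius step is sound because $(\zeta+1)^q=\zeta^q+1\neq 0$ follows from $\zeta\neq-1$ and injectivity of Frobenius, after which the substitution indeed yields $\zeta^{q^2}=\zeta$ --- your observation that $x\mapsto -x/(x+1)$ is an involution defined over the prime field, hence commutes with Frobenius, is exactly the structural reason this works. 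For comparison, the most common textbook route is to multiply by $\zeta$ and study $g(\zeta)=\zeta^{q+1}+\zeta^q+\zeta=\mathrm{N}(\zeta)+\mathrm{Tr}(\zeta)$ (norm plus trace from $\mathbb{F}_{q^2}$ to $\mathbb{F}_q$): one checks $g'=(\zeta+1)^q$ and $g(-1)\neq0$, so $g$ is separable of degree $q+1$, and a counting or direct argument places all its roots in $\mathbb{F}_{q^2}$; removing the root $\zeta=0$ gives the $q$ nonzero roots. That version buys a norm--trace interpretation of the solution set, while yours stays entirely with $f$ and makes the $\mathbb{F}_{q^2}$-rationality transparent via the fractional-linear involution; either is a perfectly good proof to include.
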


Next, we present our discussions according to two classes different
values of \emph{code locators} $\boldsymbol{\alpha}$.

(I) Fix $\beta \in \mathbb{F}_{q^{2}} \backslash \mathbb{F}_{q}$. $
\forall  ~1 \leq l \leq q $, set
\begin{equation}
\label{eqn_example} A_{l}=a_{l} \beta+\mathbb{F}_{q}:=\{{a_{l}
\beta+x: x \in \mathbb{F}_{q}}\}.
\end{equation}



In general, here we always set $\beta=\omega$.

\begin{theorem} Let $q$ be a prime power, $n=2k, n\leq q$,
$\boldsymbol{\alpha}=\left(\alpha_{1}, \alpha_{2}, \ldots,
\alpha_{n}\right) \in A_{l}^{n}$, where $\alpha_{1}, \alpha_{2},
\ldots, \alpha_{n}$ are distinct elements. If $a=0$ and $q = 2^s$
are not met at the same time, then there exists a vector
$\boldsymbol{v}=\left(v_{1}, v_{2}, \ldots, v_{n}\right) \in
(\mathbb{F}_{q^{2}}^{*})^{n}$, and $\eta \in
\mathbb{F}_{q^{2}}^{\ast}$ such that $G T R S_{\frac{n}{2},
n}[\boldsymbol{\alpha}, \boldsymbol{v}, 1, k-1, \eta]$ is an
$\left[n, \frac{n}{2}, \frac{n}{2}+1\right]$ Hermitian self-dual
GTRS code over $\mathbb{F}_{q^{2}}$.
\end{theorem}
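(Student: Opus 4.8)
The plan is to apply the characterization in Theorem 4. Since $n=2k$ and $\boldsymbol{\alpha}\in A_l^n$, we know $G T R S_{\frac{n}{2},n}[\boldsymbol{\alpha},\boldsymbol{v},1,k-1,\eta]$ is Hermitian self-dual if and only if we can find $\eta\in\mathbb{F}_{q^2}^*$ with $\eta\neq -a^{-1}$, a nonzero $g_{k-1}$, coefficients $g_0,\dots,g_{k-2}\in\mathbb{F}_{q^2}$, and a column multiplier vector $\boldsymbol{v}$ such that $v_i^{q+1}f^q(\alpha_i)=u_i g(\alpha_i)$ for $1\le i\le n$, where $f(x)=\sum_{i=0}^{k-1}f_ix^i+\eta f_{k-1}x^k$ runs over $\mathcal{P}_{k,n}$ and $g(x)=\sum_{i=0}^{k-2}g_ix^i+g_{k-1}(x^{k-1}-\tfrac{\eta}{1+a\eta}x^k)$. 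The natural strategy is to make this hold \emph{termwise} by choosing $f$ and $g$ to be "matched" up to the Frobenius twist: pick $f$ first, then force $g=$ (something) $\cdot f^q$ evaluated pointwise, and solve for $v_i^{q+1}=u_i g(\alpha_i)/f^q(\alpha_i)$. For this to define a legitimate $\boldsymbol{v}\in(\mathbb{F}_{q^2}^*)^n$ we need the right-hand side to be a nonzero $(q+1)$-th power in $\mathbb{F}_{q^2}^*$ for every $i$; recall that the $(q+1)$-th powers in $\mathbb{F}_{q^2}^*$ are exactly $\mathbb{F}_q^*$, so the requirement is that $u_i g(\alpha_i)/f^q(\alpha_i)\in\mathbb{F}_q^*$ for all $i$.

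The key device is the structure of $A_l=a_l\omega+\mathbb{F}_q$. First I would compute $u_i=\prod_{j\neq i}(\alpha_i-\alpha_j)^{-1}$: since every $\alpha_i-\alpha_j$ lies in $\mathbb{F}_q$ (the $\omega$-parts cancel), each $u_i\in\mathbb{F}_q^*$ already. So the burden reduces to choosing $f$ and $g$ so that $g(\alpha_i)/f^q(\alpha_i)\in\mathbb{F}_q^*$ for all $i$. The cleanest way to arrange this is to take $f$ and $g$ to be essentially conjugate polynomials. Concretely, writing $\alpha_i=a_l\omega+a_{l}'$-type elements and using that for $\alpha\in A_l$ one has $\alpha^q=a_l\omega^q+x=\alpha+a_l(\omega^q-\omega)$, I would look for $f$ of degree controlled so that $f^q(\alpha_i)$ is, up to an $\mathbb{F}_q^*$ scalar, equal to some fixed polynomial evaluated at $\alpha_i$ that also has the shape required of $g$. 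The role of Lemma 5 ($\omega^q+\omega\in\mathbb{F}_q$, equivalently $\omega^q-\omega$ is a fixed element whose square lies in $\mathbb{F}_q$) is exactly to control how the Frobenius acts on powers of $\alpha_i$ relative to $\mathbb{F}_q$. I would try the ansatz $f(x)=x^{k-1}-\gamma x^k$ for a suitable $\gamma=\eta^{-1}$-related constant (so that $\eta f_{k-1}$ gives the correct twist), respectively more generally a polynomial whose $q$-power conjugate, after absorbing $\mathbb{F}_q$-scalars, becomes $g$ with the twist $-\tfrac{\eta}{1+a\eta}$. Matching the twist coefficients forces an algebraic relation on $\eta$; solving it is where Lemma 6 (the equation $\zeta^q+\zeta^{q-1}+1=0$ having $q$ nonzero roots) enters, guaranteeing a valid $\eta$ exists.

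The MDS claim — that the resulting code has parameters $[n,n/2,n/2+1]$ — I would get from Lemma 1: the code is MDS iff $\eta\sum_{i\in\mathcal{I}}\alpha_i\neq -1$ for every $k$-subset $\mathcal{I}$. Since all $\alpha_i\in A_l$, any sum of $k=n/2$ of them equals $\tfrac{n}{2}a_l\omega+(\text{element of }\mathbb{F}_q)$; if $\tfrac{n}{2}a_l\neq 0$ in $\mathbb{F}_q$ this sum lies outside $\mathbb{F}_q$, hence cannot equal $-\eta^{-1}$ unless $\eta^{-1}\notin\mathbb{F}_q$, which I would rule out (or handle) using the specific $\eta$ found above; the exceptional configuration where $\tfrac{n}{2}a_l=0$ and the sum could be $\mathbb{F}_q$-valued is precisely where the hypothesis "$a=0$ and $q=2^s$ not simultaneously" is needed, since $\tfrac{n}{2}a_l=0$ with $a_l\neq 0$ forces $2\mid \tfrac n2\cdot(\text{something})$ issues in characteristic $2$, and $a=\sum\alpha_i=\tfrac n2 a_l\omega+(\cdots)$. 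So the hard part, and the step I expect to absorb most of the work, is the simultaneous solve: choosing $\eta$, the scalar $\gamma$, and verifying $g(\alpha_i)/f^q(\alpha_i)\in\mathbb{F}_q^*$ for \emph{all} $i$ at once while keeping $g$ in the exact twisted shape $x^{k-1}-\tfrac{\eta}{1+a\eta}x^k$ demanded by Theorem 4 — this is a genuinely rigid set of constraints, and threading Lemmas 5 and 6 through it, together with peeling off the characteristic-$2$ exceptional case, is the crux.
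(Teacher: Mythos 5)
Your overall strategy is the paper's: reduce to Theorem 4, observe that for $\boldsymbol{\alpha}\in A_l^n$ all differences $\alpha_i-\alpha_j$ lie in $\mathbb{F}_q$, so each $u_i\in\mathbb{F}_q^{*}$ is a $(q+1)$-th power (take $v_i^{q+1}=u_i$), use $\alpha_i^q=\alpha_i+(\xi^{q+1}-2\omega)a_l$ to rewrite $f^q(\alpha_i)$ as a shifted, coefficientwise-conjugated polynomial evaluated at $\alpha_i$, and match the twist coefficients to obtain an equation in $\eta$ solved via Lemma 6 (resp.\ $\eta^q=-\eta$ when $a=0$). Two points, however, need repair before the sketch becomes a proof.

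First, the quantifiers: $\boldsymbol{v}$ is part of the code, so you cannot ``pick $f$ first'' and then solve $v_i^{q+1}=u_i g(\alpha_i)/f^q(\alpha_i)$; the identity must hold for \emph{every} $f\in\mathcal{P}_{k,n}$ with one fixed $\boldsymbol{v}$ (and $g$ allowed to depend on $f$). Your termwise matching works only because the ratio $g(\alpha_i)/f^q(\alpha_i)$ can be made a constant independent of both $i$ and $f$ (equal to $1$ in the paper: take $g(x)=h\bigl(x+(\xi^{q+1}-2\omega)a_l\bigr)$ with $h$ the conjugate of $f$); this uniformity must be stated, otherwise the construction of $\boldsymbol{v}$ is circular. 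Second, you have misplaced the hypothesis that $a=0$ and $q=2^{s}$ do not occur together. It has nothing to do with the MDS count via Lemma 1 (the paper does not even establish the distance inside this proof; that is deferred to Corollaries 4 and 5). It is needed in the self-duality step: when $a=0$ the shifted conjugate must again have twist coefficient exactly $-\eta$, and comparing the $x^{k-1}$ and $x^{k}$ coefficients forces $k(\xi^{q+1}-2\omega)a_l=0$; since $a=na_l\omega+\sum_i x_i=0$ gives $2ka_l=0$, this kills $ka_l$ in odd characteristic but gives no information when $q=2^{s}$, where the obstruction survives. Relatedly, your distance argument as written (``cannot equal $-\eta^{-1}$ unless $\eta^{-1}\notin\mathbb{F}_q$, which I would rule out'') does not close; the MDS/NMDS dichotomy should simply be read off from Lemmas 1 and 2 for the specific $\eta$ produced, as in Corollaries 4 and 5.
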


\begin{proof} As can be seen, $|A_{l}|=q$. Let
$\boldsymbol{\alpha}=\left(\alpha_{1}, \alpha_{2}, \ldots,
\alpha_{n}\right) \in A_{l}^{n}$, then it is a straight-forward fact
that
\begin{equation}
\label{eqn_example} u_{i} = \prod_{1 \leq j \leq n, j \neq
i}\left(x_{i}-x_{j}\right)^{-1}.
\end{equation}


%

It is obvious that $u_{i} \in \mathbb{F}_{q}^{\ast}$, thus there
exists $v_{i} \in \mathbb{F}_{q^{2}}^{*}$ such that
$v_{i}^{q+1}=u_{i}.$ Set $\mathbf{v}=\left(v_{1}, v_{2}, \ldots,
v_{n}\right)$.

Let $\omega^q+\omega=\xi^{q+1}$, then
\begin{eqnarray}
 \alpha_{i}^{q}
&=&(a_{l} \omega+x_{i})^{q} \nonumber\\
&=& a_{l}^{q} \omega^{q}+x_{i}^{q} \nonumber \\
&=& a_{l} (\xi^{q+1}-\omega)+x_{i} \nonumber \\
&=&  (a_{l}\omega+x_{i})+(\xi^{q+1}-2 \omega)a_{l} \nonumber \\
&=& \alpha_{i}+(\xi^{q+1}-2 \omega)a_{l} \nonumber.
\end{eqnarray}

\noindent For all $f(x) \in \mathbb{F}_{q^{2}}[x]$ with form
$f(x)=\sum_{i=0}^{k-2} f_{i} x^{i}+ f_{k-1}(x^{k-1}+ \eta x^{k}),
f_{k-1}\neq 0$, we will discuss it in two ways.

(1) In the case of $a=0$ and $q \neq 2^s$, set $\eta^q=-\eta$, and
$h(x)=\sum_{i=0}^{k-2} f_{i}^{q} x^{i}+ f_{k-1}^{q}(x^{k-1}- \eta
x^{k})$. By $\alpha_{i}^{q}=\alpha_{i}+(\xi^{q+1}-2\omega)a_{l}$,
therefore
\begin{eqnarray}
 f^{q}(\alpha_{i})
&=& \sum_{j=0}^{k-2}f_{j}^{q}(\alpha_{i}^{q})^{j}+f_{k-1}^{q}((\alpha_{i}^{q})^{k-1}+\eta^{q}(\alpha_{i}^{q})^{k}) \nonumber\\
&=& h(\alpha_{i}+(\xi^{q+1}-2\omega)a_{l}). \nonumber
\end{eqnarray}

%

Set $g(x)=h(x+(\xi^{q+1}-2\omega)a_{l})$, then there exists
$g(x)=\sum_{i=0}^{k-2} g_{i}x^{i}+g_{k-1}(x^{k-1}-\eta x^{k}) \in
\mathbb{F}_{q^2}[x]$ with $ g_{k-1} \neq 0$ such that
$f^{q}(\alpha_{i})=g(\alpha_{i}), 1\leq i \leq n$. Therefore, there
exists a $g(x)$ such that $v_{i}^{q+1} f^{q}(\alpha_{i})=u_{i}
g(\alpha_{i}), 1\leq i \leq n.$ By Theorem 4, $G T R S_{\frac{n}{2},
n}[\boldsymbol{\alpha}, \boldsymbol{v}, 1, k-1, \eta]$ is a
Hermitian self-dual GTRS code.

(2) In the case of $a\neq 0$, set $\eta^q=\mu\eta$, $\mu \in
\mathbb{F}_{q^2} $ and $h(x)=\sum_{i=0}^{k-2} f_{i}^{q} x^{i}+
f_{k-1}^{q}(x^{k-1}+ \mu\eta x^{k})$. By
$\alpha_{i}^{q}=\alpha_{i}+(\xi^{q+1}-2\omega)a_{l}$, therefore
$f^{q}(\alpha_{i})=h(\alpha_{i}+(\xi^{q+1}-2\omega)a_{l}).$

Set $g(x)=h(x+(\xi^{q+1}-2\omega)a_{l})$, to make $g(x)$ has form
$g(x)=\sum_{i=0}^{k-2} g_{i}x^{i}+g_{k-1}(x^{k-1}-\frac{\eta}{1+a
\eta} x^{k}) \in \mathbb{F}_{q^2}[x]$ with $ g_{k-1} \neq 0$, by
analyzing the coefficient of $x^{k-1}$ and $x^{k}$ on both sides,
then
\begin{equation}
\label{eqn_example}
\frac{\mu\eta}{k(\xi^{q+1}-2\omega)a_{l}\mu\eta+1}=-\frac{\eta}{1+a\eta}.
\end{equation}

Combining with $\eta^q=\mu\eta$ and Equation (22), then
\begin{equation}
\label{eqn_example}
[k(\xi^{q+1}-2\omega)a_{l}+a]\eta^{q}+\eta^{q-1}+1=0.
\end{equation}


It is easy to prove that $A\triangleq k(\xi^{q+1}-2\omega)a_{l}+a=
\sum_{i=1}^{n} x_{i}+k \xi^{q+1}a_{l} \in \mathbb{F}_{q}$. Setting
$\zeta=A\eta$ transforms Equation (23) to
$\zeta^{q}+\zeta^{q-1}+A^{q-1}=0$, that is
$\zeta^{q}+\zeta^{q-1}+1=0$. By Lemma 5, Equation (23) has $q$
distinct nonzero roots in $\mathbb{F}_{q^{2}}$. Then there exists a
$g(x)$ such that $f^{q}(\alpha_{i})=g(\alpha_{i}), 1\leq i \leq n$.
Therefore, there exists a $g(x)$ such that $v_{i}^{q+1}
f^{q}(\alpha_{i})=u_{i} g(\alpha_{i}), 1\leq i \leq n.$ By Theorem
4, $G T R S_{\frac{n}{2}, n}[\boldsymbol{\alpha}, \boldsymbol{v}, 1,
k-1, \eta]$ is a Hermitian self-dual GTRS code, which proves the
claim.
\end{proof}

(II) Let $\beta_{m}=\omega^{m}, 1\leq m \leq q$, $ \forall 1 \leq l
\leq q$, denote
\begin{equation}
\label{eqn_example} A_{l,m}=a_{l} +\mathbb{F}_{q} \cdot \beta_{m}
:=\{{a_{l}+ \beta_{m} x: x \in \mathbb{F}_{q}}\}.
\end{equation}

\begin{theorem} Let $q$ be a prime power, $n=2k, n\leq q$,
$\boldsymbol{\alpha}=\left(\alpha_{1}, \alpha_{2}, \ldots,
\alpha_{n}\right) \in A_{l,m}^{n}$ with $\alpha_{1}, \alpha_{2},
\ldots, \alpha_{n}$ distinct elements. If $a=0$ and $q = 2^s$ are
not met at the same time, then there exists a vector
$\boldsymbol{v}=\left(v_{1}, v_{2}, \ldots, v_{n}\right) \in
(\mathbb{F}_{q^{2}}^{*})^{n}$, and $\eta \in
\mathbb{F}_{q^{2}}^{\ast}$ such that $G T R S_{\frac{n}{2},
n}[\boldsymbol{\alpha}, \boldsymbol{v}, 1, k-1, \eta]$ is an
$\left[n, \frac{n}{2}, \frac{n}{2}+1\right]$ Hermitian self-dual
GTRS code over $\mathbb{F}_{q^{2}}$.
\end{theorem}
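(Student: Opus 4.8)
The plan is to obtain this theorem as an immediate consequence of the previous one (Theorem~5, the construction with code locators in $A_l$) by rescaling the code locators, so that no new field arithmetic is needed. Write $\beta=\beta_m=\omega^m$; since $1\le m\le q$ we have $\beta\notin\mathbb{F}_q$, so $\{1,\omega\}$ is an $\mathbb{F}_q$-basis of $\mathbb{F}_{q^2}$. Put $\boldsymbol{\alpha}'=\beta^{-1}\boldsymbol{\alpha}=(\beta^{-1}\alpha_1,\ldots,\beta^{-1}\alpha_n)$. Since $\alpha_i\in A_{l,m}=a_l+\mathbb{F}_q\beta$, each $\beta^{-1}\alpha_i$ lies in $\beta^{-1}a_l+\mathbb{F}_q$; writing $\beta^{-1}a_l=b+c\omega$ with $b,c\in\mathbb{F}_q$ gives $\beta^{-1}a_l+\mathbb{F}_q=c\omega+\mathbb{F}_q=A_{l'}$ for the index $l'$ with $a_{l'}=c$. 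Hence $\boldsymbol{\alpha}'\in A_{l'}^{\,n}$ has pairwise distinct entries, $n=2k\le q$, and $\sum_i\beta^{-1}\alpha_i=\beta^{-1}a$, so $\sum_i\alpha_i'=0$ exactly when $a=0$. In particular the hypothesis ``$a=0$ and $q=2^s$ are not met simultaneously'' transfers verbatim to $\boldsymbol{\alpha}'$, and Theorem~5 applied to $\boldsymbol{\alpha}'$ yields $\boldsymbol{v}'\in(\mathbb{F}_{q^2}^{\ast})^n$ and $\eta'\in\mathbb{F}_{q^2}^{\ast}$ with $GTRS_{k,n}[\boldsymbol{\alpha}',\boldsymbol{v}',1,k-1,\eta']$ an $[n,\frac{n}{2},\frac{n}{2}+1]$ Hermitian self-dual code over $\mathbb{F}_{q^2}$.

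The second ingredient is a rescaling identity for $(+)$-GTRS codes: for every $\lambda\in\mathbb{F}_{q^2}^{\ast}$,
\[ GTRS_{k,n}[\lambda\boldsymbol{\alpha},\boldsymbol{v},1,k-1,\eta]=GTRS_{k,n}[\boldsymbol{\alpha},\boldsymbol{v},1,k-1,\lambda\eta] \]
as \emph{sets of codewords}. Indeed, if $f(x)=\sum_{i=0}^{k-2}f_ix^{i}+f_{k-1}(x^{k-1}+\eta x^{k})$ then $f(\lambda x)=\sum_{i=0}^{k-2}(\lambda^{i}f_i)x^{i}+(\lambda^{k-1}f_{k-1})(x^{k-1}+(\lambda\eta)x^{k})$ is again a twisted polynomial, now with twist $\lambda\eta$, and $(f_i)_i\mapsto(\lambda^{i}f_i)_i$ is a bijection of coefficient vectors; substituting $x\mapsto\lambda\alpha_i$ identifies the two codeword sets. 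Taking $\lambda=\beta^{-1}$ gives $GTRS_{k,n}[\boldsymbol{\alpha}',\boldsymbol{v}',1,k-1,\eta']=GTRS_{k,n}[\boldsymbol{\alpha},\boldsymbol{v}',1,k-1,\beta^{-1}\eta']$, so with $\boldsymbol{v}=\boldsymbol{v}'$ and $\eta=\beta^{-1}\eta'=\omega^{-m}\eta'\in\mathbb{F}_{q^2}^{\ast}$ the code $GTRS_{\frac{n}{2},n}[\boldsymbol{\alpha},\boldsymbol{v},1,k-1,\eta]$ \emph{is} the code produced by Theorem~5. Being literally the same code, it is $[n,\frac{n}{2},\frac{n}{2}+1]$ and Hermitian self-dual over $\mathbb{F}_{q^2}$, with the MDS property and the self-duality inherited and nothing left to verify.

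The two points needing care are that the rescaled locator set really lands in the family $A_{l'}$ of Theorem~5 (handled by the $\{1,\omega\}$-expansion) and, more importantly, that the rescaling identity above gives the \emph{same} code rather than a merely monomially equivalent one: Hermitian self-duality is not preserved under an arbitrary monomial equivalence, so it is essential that $f(\lambda x)$ again be a twisted polynomial with the twist only rescaled. One could instead mimic the proof of Theorem~5 line by line: for $\boldsymbol{\alpha}\subseteq A_{l,m}$ one finds $\alpha_i^{q}=\gamma_m\alpha_i+(1-\gamma_m)a_l$, where $\gamma_m=\beta_m^{q-1}$ is a nontrivial $(q+1)$-th root of unity (this replaces the translation $\alpha_i^{q}=\alpha_i+(\xi^{q+1}-2\omega)a_l$ used in Theorem~5), chooses $v_i$ with $v_i^{q+1}=\beta_m^{n-1}u_i\in\mathbb{F}_q^{\ast}$ (note $\alpha_i-\alpha_j\in\beta_m\mathbb{F}_q^{\ast}$), and via Theorem~4 reduces Hermitian self-duality to the solvability of $B\eta^{q}+\gamma_m\eta^{q-1}+1=0$ with $B=a\gamma_m+k(1-\gamma_m)a_l$; the wrinkle here, absent in Theorem~5, is that $B$ lies in $\omega^{-m}\mathbb{F}_q$ and not in $\mathbb{F}_q$, so one must first substitute $\eta=c\zeta$ for a suitable $c\in\omega^{-m}\mathbb{F}_q^{\ast}$ to bring the equation into the form $\zeta^{q}+\zeta^{q-1}+1=0$ covered by Lemma~5 (the degenerate case $B=0$, which forces $a=0$ hence $q$ odd and $ka_l=0$, being handled directly via $\eta^{q-1}=-\gamma_m^{-1}$). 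I would present the reduction, as it is shorter and bypasses re-verifying the distance.
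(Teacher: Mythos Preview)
Your main argument---reducing Theorem~6 to Theorem~5 via the rescaling identity $GTRS_{k,n}[\lambda\boldsymbol{\alpha},\boldsymbol{v},1,k-1,\eta]=GTRS_{k,n}[\boldsymbol{\alpha},\boldsymbol{v},1,k-1,\lambda\eta]$---is correct and is \emph{not} how the paper proceeds. The paper instead reruns the proof of Theorem~5 directly for the locators in $A_{l,m}$: it computes $u_i=\beta_m^{-(n-1)}\prod_{j\neq i}(x_i-x_j)^{-1}$, takes $v_i$ with $v_i^{q+1}=\beta_m^{n-1}u_i$, uses $\alpha_i^{q}=\beta_m^{q-1}\alpha_i+(1-\beta_m^{q-1})a_l$, and in the case $a\neq 0$ reduces via Theorem~4 to the equation $[k(1-\beta_m^{q-1})a_l+a\beta_m^{q-1}]\eta^{q}+\beta_m^{q-1}\eta^{q-1}+1=0$, which after the substitution $\zeta=B\eta$ with $B=[k(1-\beta_m^{q-1})a_l+a\beta_m^{q-1}]/\beta_m^{q-1}$ and the observation $B\beta_m^{-1}\in\mathbb{F}_q$ becomes $\zeta^{q}+\zeta^{q-1}+1=0$, handled by Lemma~5; the case $a=0$ yields $\eta^{q-1}=-\beta_m^{-(q-1)}$ directly. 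This is precisely the alternative you sketch in your final paragraph. Your reduction buys brevity and avoids re-verifying anything, at the cost of not producing the explicit description of $\boldsymbol{v}$ and of the equation for $\eta$ that the direct route gives (and that the paper's subsequent corollaries and examples use). Two small points: the clause ``$\beta\notin\mathbb{F}_q$, so $\{1,\omega\}$ is an $\mathbb{F}_q$-basis'' is a non sequitur (the basis statement follows from $\omega\notin\mathbb{F}_q$, not from anything about $\beta$); and in your sketch of the direct method, ``$B=0$ forces $a=0$'' is not quite right---writing $a=2ka_l+S\beta_m$ with $S=\sum x_i$ one finds $B\beta_m=ka_l(\beta_m+\beta_m^{q})+S\beta_m^{q+1}$, a single $\mathbb{F}_q$-relation, which can vanish without both $2ka_l=0$ and $S=0$ holding---but the resulting equation $\eta^{q-1}=-\gamma_m^{-1}$ is solvable in $\mathbb{F}_{q^2}^*$ regardless (since $(-\gamma_m^{-1})^{q+1}=1$), so the sketch still closes.
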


\begin{proof} As can be seen, $|A_{l,m}|=q.$ Let
$\boldsymbol{\alpha}=(\alpha_{1}, \alpha_{2}, \ldots, \alpha_{n})\in
A_{l,m}^{n}$, then it can be shown that

\begin{equation}
\label{eqn_example} u_{i} = \beta_{m}^{-(n-1)}\prod_{1 \leq j \leq
n, j \neq i}\left(x_{i}-x_{j}\right)^{-1}.
\end{equation}

%
Let $\lambda=\beta_{m}^{n-1}=\omega^{m(n-1)} \in
\mathbb{F}_{q^{2}}^{*}$, thus there exists $v_{i} \in
\mathbb{F}_{q^{2}}^{*}$ such that $v_{i}^{q+1}=\lambda u_{i}.$ It
turns out that $\alpha_{i}^{q}=\beta_{m}^{q-1}
\alpha_{i}+(1-\beta_{m}^{q-1})a_{l}$. For all $f(x) \in
\mathbb{F}_{q^{2}}[x]$ with form $f(x)=\sum_{i=0}^{k-2} f_{i} x^{i}+
f_{k-1}(x^{k-1}+ \eta x^{k}), f_{k-1}\neq 0$, set
$h(x)=\sum_{i=0}^{k-2} f_{i}^{q} x^{i}+ f_{k-1}^{q}(x^{k-1}+ \mu\eta
x^{k})$, and $\eta^q=\mu\eta$. By $\alpha_{i}^{q}=\beta_{m}^{q-1}
\alpha_{i}+(1-\beta_{m}^{q-1})a_{l}$, then
\begin{eqnarray}
 f^{q}(\alpha_{i})
&=& \sum_{j=0}^{k-2}
f_{j}^{q}(\alpha_{i}^{q})^{j}+f_{k-1}^{q}((\alpha_{i}^{q})^{k-1}+\eta^{q}(\alpha_{i}^{q})^{k}) \nonumber\\
&=& h(\beta_{m}^{q-1} \alpha_{i}+(1-\beta_{m}^{q-1})a_{l}).
\nonumber
\end{eqnarray}

%

Set $g(x)=\lambda h(\beta_{m}^{q-1} x+(1-\beta_{m}^{q-1})a_{l})$, we
also consider the following two cases.

(1) In the case of $a=0$ and $q \neq 2^s$, to make $g(x)$ has the
form $g(x)=\sum_{i=0}^{k-2} g_{i}x^{i}+g_{k-1}(x^{k-1}-\eta x^{k})
\in \mathbb{F}_{q^2}[x]$ with $ g_{k-1} \neq 0$, then by considering
the coefficient of $x^{k-1}$ and $x^{k}$ on both sides, then
\begin{equation}
\label{eqn_example} \lambda\mu\eta(\beta_{m}^{q-1})^{k}=-\lambda
\eta (\beta_{m}^{q-1})^{k-1}.
\end{equation}
that is
\begin{equation}
\label{eqn_example} \mu \beta_{m}^{q-1}=-1.
\end{equation}

Combining with $\eta^q=\mu\eta$ and Equation (27), then
\begin{equation}
\label{eqn_example} \eta^{q-1}=-\beta_{m}^{-(q-1)}.
\end{equation}

Obviously, Equation (28) has $q-1$ distinct nonzero roots in
$\mathbb{F}_{q^{2}}$.

%
%
%

(2) In the case of $a \neq 0$, to make $g(x)$ has form
$g(x)=\sum_{i=0}^{k-2} g_{i}x^{i}+g_{k-1}(x^{k-1}-\frac{\eta}{1+a
\eta} x^{k}) \in \mathbb{F}_{q^2}[x]$ with $ g_{k-1} \neq 0$, by
analyzing the coefficient of $x^{k-1}$ and $x^{k}$ on both sides,
then
\begin{equation}
\label{eqn_example}
\frac{\mu\eta\beta_{m}^{q-1}}{1+k\mu\eta(1-\beta_{m}^{q-1})a_{l}}=-\frac{\eta}{1+a\eta}.
\end{equation}

Combining with $\eta^q=\mu\eta$ and Equation (29), then
\begin{equation}
\label{eqn_example}
[k(1-\beta_{m}^{q-1})a_{l}+a\beta_{m}^{q-1}]\eta^{q}+\beta_{m}^{q-1}\eta^{q-1}+1=0.
\end{equation}

Denoting $B\triangleq
\frac{k(1-\beta_{m}^{q-1})a_{l}+a\beta_{m}^{q-1}}{\beta_{m}^{q-1}}
$, and setting $\zeta=B\eta$ transforms Equation (30) to
$\zeta^{q}+\zeta^{q-1}+(B\beta_{m}^{-1})^{q-1}=0$, it is easy to
know $B\beta_{m}^{-1} \in \mathbb{F}_{q}$, that is
$\zeta^{q}+\zeta^{q-1}+1=0$. By Lemma 5, Equation (30) has $q$
distinct nonzero roots in $\mathbb{F}_{q^{2}}$.

From the above discussions, it follows that there exists a $g(x)$
such that $f^{q}(\alpha_{i})=\lambda^{-1} g(\alpha_{i}), 1\leq i
\leq n$. Therefore, there exists a $g(x)$ such that $v_{i}^{q+1}
f^{q}(\alpha_{i})=u_{i} g(\alpha_{i}), 1\leq i \leq n.$ By Theorem
4, the conclusion is established.
\end{proof}

\begin{remark} In the light of Theorem 2.5 in \cite{Huang},
suppose that $a=0$, then $G T R S_{\frac{n}{2},
n}[\boldsymbol{\alpha}, \boldsymbol{v}, $ $1, k-1, \eta]$  can not
be a Euclidean self-dual MDS code, however, it can be a Hermitian
self-dual MDS code.
\end{remark}

Building on Theorems 5 and 6, and by Lemmas 1 and 2, we derive two
striking conclusions.

\begin{cor} In Theorems 5 and 6, if $a=0$, then a
Hermitian self-dual GTRS code $G T R S_{\frac{n}{2},
n}[\boldsymbol{\alpha}, \boldsymbol{v}, 1, k-1, \eta]$ is a MDS code
over $\mathbb{F}_{q^{2}}$.
\end{cor}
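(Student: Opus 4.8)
The plan is to verify directly the MDS criterion of Lemma 1 (equivalently Remark 3): the code is MDS as soon as $-\eta^{-1}$ cannot be written as a sum of $k$ distinct evaluation points. The key is that, once $a=0$, both the evaluation points and the admissible twist $\eta$ become rigid enough to make this immediate.

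First I would extract the consequences of $a=0$ for the evaluation points. In Theorem 5 every $\alpha_i$ lies in $A_l=a_l\omega+\mathbb{F}_q$, so for any $k$-subset $\mathcal{I}\subseteq\{1,\ldots,n\}$ one has $\sum_{i\in\mathcal{I}}\alpha_i=ka_l\omega+\sum_{i\in\mathcal{I}}x_i$ with all $x_i\in\mathbb{F}_q$. Since $\{1,\omega\}$ is an $\mathbb{F}_q$-basis of $\mathbb{F}_{q^2}$, the identity $a=\sum_{i=1}^n\alpha_i=2ka_l\omega+\sum_{i=1}^n x_i=0$ forces $2ka_l=0$ in $\mathbb{F}_q$, hence $ka_l=0$ since $q$ is odd (by the standing hypothesis of Theorem 5 that $a=0$ and $q=2^s$ do not hold simultaneously). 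Therefore $\sum_{i\in\mathcal{I}}\alpha_i=\sum_{i\in\mathcal{I}}x_i\in\mathbb{F}_q$: every sum of $k$ evaluation points lands in $\mathbb{F}_q$. The same computation for Theorem 6, now with $A_{l,m}=a_l+\mathbb{F}_q\beta_m$ and the $\mathbb{F}_q$-basis $\{1,\beta_m\}$, again yields $ka_l=0$, so there $\sum_{i\in\mathcal{I}}\alpha_i=\beta_m\sum_{i\in\mathcal{I}}x_i\in\mathbb{F}_q\beta_m$.

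Next I would show $-\eta^{-1}$ misses this set. For Theorem 5 the twist satisfies $\eta^q=-\eta$, i.e.\ $\eta^{q-1}=-1\neq 1$, so $\eta\notin\mathbb{F}_q$ and hence $-\eta^{-1}\notin\mathbb{F}_q$; by the previous paragraph $-\eta^{-1}$ is not a sum of $k$ evaluation points, so Lemma 1 gives that the code is MDS. For Theorem 6 the twist satisfies $\eta^{q-1}=-\beta_m^{-(q-1)}$; if $-\eta^{-1}=\beta_m y$ for some $y\in\mathbb{F}_q$ (necessarily $y\neq 0$), then raising to the $(q-1)$-st power and using $(-1)^{q-1}=1$ and $y^{q-1}=1$ gives $\eta^{q-1}=\beta_m^{-(q-1)}$, contradicting $\eta^{q-1}=-\beta_m^{-(q-1)}$ since $2\neq 0$ in $\mathbb{F}_q$. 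Hence $-\eta^{-1}\notin\mathbb{F}_q\beta_m$, and again the code is MDS. Since in both situations the code is already Hermitian self-dual by Theorem 5, resp.\ Theorem 6, this proves the corollary.

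The only delicate step is the bookkeeping in the preceding paragraph: $a=0$ alone does not kill the $\omega$- (resp.\ $\beta_m$-) component of a $k$-fold sum of evaluation points, one needs the factor $2$ to be invertible, which is precisely why the hypothesis $q\neq 2^s$ appears and, as pointed out in Remark 6, why the same code can never be Euclidean self-dual MDS even though it is Hermitian self-dual MDS. Everything else is a one-line manipulation of the explicit twist relations recorded in the proofs of Theorems 5 and 6.
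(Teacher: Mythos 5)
Your proof is correct, and it follows exactly the route the paper gestures at: the paper states this corollary with no written proof beyond the phrase ``by Lemmas 1 and 2,'' and your argument supplies the missing verification of Lemma 1's criterion, namely that $a=0$ (together with $q$ odd, forced by the hypothesis of Theorems 5 and 6) confines every $k$-fold sum of evaluation points to $\mathbb{F}_q$ (resp.\ $\mathbb{F}_q\beta_m$), while the twist relations $\eta^{q-1}=-1$ (resp.\ $\eta^{q-1}=-\beta_m^{-(q-1)}$) from case (1) of each construction keep $-\eta^{-1}$ out of that set. The only detail worth making explicit is that $\beta_m=\omega^m\notin\mathbb{F}_q$ for $1\leq m\leq q$ (since $q+1\nmid m$), so that $\{1,\beta_m\}$ is indeed an $\mathbb{F}_q$-basis in the Theorem~6 case.
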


\begin{cor} In Theorems 5 and 6, if $a\eta+2=0$,
then a Hermitian self-dual GTRS code $G T R S_{\frac{n}{2},
n}[\boldsymbol{\alpha}, \boldsymbol{v}, 1, k-1, \eta]$ is NMDS.
Otherwise, $G T R S_{\frac{n}{2}, n}[\boldsymbol{\alpha},
\boldsymbol{v}, 1, k-1, \eta]$ is a MDS code over
$\mathbb{F}_{q^{2}}$.
\end{cor}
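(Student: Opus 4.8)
The plan is to derive everything from Lemmas 1 and 2 together with Remark 1. By Remark 1 the Hermitian self-dual code $\mathcal{C}:=GTRS_{\frac{n}{2},n}[\boldsymbol{\alpha},\boldsymbol{v},1,k-1,\eta]$ is either MDS or NMDS, and by Lemmas 1 and 2 it is NMDS precisely when there is a subset $\mathcal{I}\subseteq\{1,\ldots,n\}$ with $|\mathcal{I}|=k$ and $\eta\sum_{i\in\mathcal{I}}\alpha_{i}=-1$, i.e.\ with $S_{\mathcal{I}}:=\sum_{i\in\mathcal{I}}\alpha_{i}=-\eta^{-1}$. So the whole statement reduces to deciding whether $-\eta^{-1}$ occurs among the $k$-subset sums $S_{\mathcal{I}}$.

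The key structural point is that in Theorem 5 every code locator lies in the coset $a_{l}\omega+\mathbb{F}_{q}$ of $\mathbb{F}_{q}$ in $\mathbb{F}_{q^{2}}$ (in Theorem 6, in $a_{l}+\mathbb{F}_{q}\beta_{m}$), so every $k$-subset sum $S_{\mathcal{I}}$ lies in the \emph{single} coset $C:=ka_{l}\omega+\mathbb{F}_{q}$ (resp.\ $C:=ka_{l}+\mathbb{F}_{q}\beta_{m}$). Hence $\mathcal{C}$ can fail to be MDS only if $-\eta^{-1}\in C$, and the heart of the proof is the equivalence
\[
-\eta^{-1}\in C\quad\Longleftrightarrow\quad a\eta+2=0 .
\]
For this I would use the relation that defines $\eta$ in Theorems 5 and 6. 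In the Theorem 5 situation it can be written $\eta^{q}=-\eta/(A\eta+1)$ with $A\in\mathbb{F}_{q}$ satisfying $A-ka_{l}(\omega^{q}-\omega)=a$ (the Theorem 6 version carries extra $\beta_{m}^{q-1}$ factors but is treated the same way). Now $-\eta^{-1}\in C$ is equivalent to $(-\eta^{-1})^{q}-(-\eta^{-1})=ka_{l}(\omega^{q}-\omega)$, that is to $(\eta^{q}-\eta)/\eta^{q+1}=ka_{l}(\omega^{q}-\omega)$; substituting $\eta^{q}=-\eta/(A\eta+1)$ collapses the left-hand side to $(A\eta+2)/\eta$, and the identity $A-ka_{l}(\omega^{q}-\omega)=a$ turns the condition into $a\eta=-2$. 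Therefore, if $a\eta+2\neq0$ then $-\eta^{-1}\notin C$, no $k$-subset sum equals $-\eta^{-1}$, and Lemma 1 forces $\mathcal{C}$ to be MDS; in particular this covers the case $a=0$, for which $a\eta+2=2\neq0$.

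It remains to handle $a\eta+2=0$. Then $a\neq0$ — in characteristic two $a=0$ is excluded by the hypotheses of Theorems 5 and 6, and in odd characteristic $a\eta+2=0$ already forces $a\neq0$ — so $\eta=-2/a$ and $-\eta^{-1}=a/2$, which does lie in $C$ (in the Theorem 5 normalization $a/2=ka_{l}\omega+\frac{1}{2}\sum_{i}x_{i}$). To conclude I would exhibit a $k$-subset $\mathcal{I}$ with $S_{\mathcal{I}}=a/2$; since $S_{\mathcal{I}}+S_{\mathcal{I}^{c}}=a$ and $\mathcal{I}^{c}$ is again a $k$-subset, this amounts to producing a \emph{balanced} half of the code locators, equivalently a $k$-element subset of $\{x_{1},\ldots,x_{n}\}$ whose sum equals $\frac{1}{2}\sum_{i}x_{i}$. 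Lemma 2 then shows $\mathcal{C}$ is AMDS, and since $\mathcal{C}=\mathcal{C}^{\perp_{H}}$, it is automatically NMDS.

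The balanced-half step is where I expect the real difficulty to lie. The MDS half of the dichotomy is essentially formal once the coset observation and the displayed equivalence are in place, whereas the NMDS half rests on a genuine subset-sum property of $\{x_{1},\ldots,x_{n}\}\subseteq\mathbb{F}_{q}$: I would try to establish it by a swapping argument showing that the attainable $k$-subset sums form an interval containing the midpoint $\frac{1}{2}\sum_{i}x_{i}$, or by pairing the evaluation points into complementary halves of equal sum, possibly at the cost of a mild restriction on the parameters.
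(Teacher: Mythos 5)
Your reduction of the corollary to Lemmas 1 and 2, the observation that every $k$-subset sum of the locators lies in the single coset $C$ ($=ka_{l}\omega+\mathbb{F}_{q}$ in Theorem 5, $=ka_{l}+\mathbb{F}_{q}\beta_{m}$ in Theorem 6), and the computation showing that for $\eta$ satisfying the defining equation (23) (resp.\ (30)) one has $-\eta^{-1}\in C$ if and only if $a\eta+2=0$, are all correct; I checked the Theorem 6 variant with the $\beta_{m}^{q-1}$ factors and it collapses to $a\eta+2=0$ exactly as you predict. This gives a complete proof of the ``otherwise MDS'' half — which is more than the paper supplies, since the paper states the corollary with no argument beyond the phrase ``by Lemmas 1 and 2.''

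The gap you flag in the NMDS half is genuine, and it cannot be closed in general: that half of the corollary is false without an extra hypothesis. Concretely, take $q=7$, $k=2$, $n=4$, $a_{l}=1$ and locators $\alpha_{i}=\omega+x_{i}$ with $\{x_{1},\ldots,x_{4}\}=\{0,1,2,4\}\subseteq\mathbb{F}_{7}$. Then $\sum_{i}x_{i}=0$, so $a=4\omega\neq 0$ and $A=ka_{l}(\omega^{q}+\omega)+\sum_{i}x_{i}=2(\omega+\omega^{7})\neq 0$; one checks directly that $\zeta=A\cdot(-2/a)=-1-a^{q-1}$ satisfies $\zeta^{q}+\zeta^{q-1}+1=0$, so $\eta=-2/a$ is one of the $q$ admissible roots of equation (23) and Theorem 5 produces a Hermitian self-dual code with $a\eta+2=0$. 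But the two-element subset sums of $\{0,1,2,4\}$ are $1,2,3,4,5,6$, never $0$, so no $k$-subset of the $\alpha_{i}$ sums to $-\eta^{-1}=a/2=2\omega$, and Lemma 1 forces this code to be MDS, not NMDS. So the ``balanced half'' you ask for need not exist; what is actually true is only that $a\eta+2\neq 0$ implies MDS, while $a\eta+2=0$ yields NMDS precisely when some $k$-subset of the $x_{i}$ sums to $\tfrac12\sum_{i}x_{i}$. Your instinct to isolate that subset-sum condition as the crux, rather than waving it through, was exactly right.
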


\begin{example} To be more precise, let $q=7$, we present some
examples of Hermitian self-dual GTRS codes $G T R
S_{3,6}[\boldsymbol{\alpha}, \boldsymbol{v}, 1, 2, \eta]$ over
$\mathbb{F}_{7^2}$ in Table 1.
\end{example}

\begin{table}
\begin{center}
\caption{Some Hermitian self-dual $G T R
S_{3,6}[\boldsymbol{\alpha}, \boldsymbol{v}, 1, 2, \eta]$ with
parameters $[6,3,4]$ or $[6,3,3]$ over
$\mathbb{F}_{7^{2}}$.\label{tab1}} {\small
\begin{tabular}{lccccc}
\toprule
Class  & $a$ & Para. & $\boldsymbol{\alpha}$ & $\boldsymbol{v}$ & $\eta$   \\
\midrule (I) &$a=0$ & $[6,3,4]$  & $(1,2,3,4,5,6)$ &
$(\omega^4,1,\omega^{11},3, \omega^{9}, \omega)$ &
$\{\omega^4,\omega^{12},\omega^{20},\omega^{28}, \omega^{36}, \omega^{44}\}$  \\

(I) &$a \neq 0$ & $[6,3,4]$  &
$(\omega,\omega^2,\omega^5,\omega^{11},\omega^{31},\omega^{36})$ &
$(\omega^2,\omega^5,\omega^6,\omega^{10},\omega,\omega^{3})$ &
$\{\omega^{17},\omega^{23},\omega^{27},\omega^{38}, 5, \omega^{45}\}$  \\

(I) &$a \neq 0$ & $[6,3,3]$  &
$(\omega,\omega^2,\omega^5,\omega^{11},\omega^{31},\omega^{36})$ &
$(\omega^2,\omega^5,\omega^6,\omega^{10},\omega,\omega^{3})$ &
$\{\omega^{26}\}$  \\

(II) &$a = 0$ & $[6,3,4]$  &
$(\omega^4,\omega^{28},\omega^{20},\omega^{44},\omega^{12},\omega^{36})$
& $(\omega,\omega^{10},\omega^3,1,\omega^{2},\omega^{11})$ &
$\{1,2,3,4,5,6\}$  \\

(II) &$a \neq 0$ & $[6,3,4]$  &
$(\omega,\omega^{25},0,\omega^{17},\omega^{41},\omega^{9})$ &
$(\omega^2,1,\omega^{5},\omega^{3},\omega^{4},\omega)$ &
$\{3,\omega^{14},\omega^{17},\omega^{18},\omega^{29}, \omega^{36}\}$  \\

(II) &$a \neq 0$ & $[6,3,3]$  &
$(\omega,\omega^{25},0,\omega^{17},\omega^{41},\omega^{9})$ &
$(\omega^2,1,\omega^{5},\omega^{3},\omega^{4},\omega)$ &
$\{\omega^{31}\}$  \\
\bottomrule
\end{tabular}}
\end{center}
\end{table}

\begin{remark}
As a potential application in McEliece cryptosystem, GTRS codes play
an important role in reducing the public key size for a given
security level. In addition, according to \cite{Baldi2}, some
choices of the system parameters can avoid the mentioned attack,
e.g. using codes with rate $R \simeq \frac{1}{2}$. We know a
self-dual code have rate $R = \frac{1}{2}$. On the other hand,
people begin to construct cryptosystem by using variant codes of
original GRS and GTRS codes, It is worth noting that TRS codes are
also subcodes of GRS codes. We know the generator matrix and
dimension of subfield subcodes of GRS and GTRS codes are not
guaranteed and depends on the actual choice of code locators
$\mathbf{a}$, column multipliers $\mathbf{v}$ and variable $\eta$
\cite{Senger}. Our investigation on determining Hermitian self-dual
GTRS codes with pairs of $(\boldsymbol{\mathbf{a}},
\boldsymbol{\mathbf{v}},\eta)$, which are expected that these codes
and their subcodes can be used for constructing McEliece code-based
cryptosystems with resisting some more known structural attacks.
\end{remark}

\section{Conclusion and discussion}

In this paper, we mainly propose a systematical approach to
construct Hermitian self-dual (+)-GTRS codes for the first time.
Finally, we obtain several classes of $q^2$-ary Hermitian self-dual
MDS and NMDS codes derived from these GTRS codes.
Further, the techniques developed in this paper can be also applied
for these MDS codes in \cite{N2021} to obtain new Hermitian
self-dual MDS codes. Meanwhile it is also a worthy research topic to
construct Hermitian self-orthogonal (especially almost self-dual)
and Hermitian LCD MDS and NMDS codes through GTRS codes applying
this method.


\begin{thebibliography}{99}



\bibitem{Roth1}
Roth, R. M., Lempel, A.: A construction of non-Reed-Solomon type MDS
codes. IEEE Trans. Inf. Theory {\bf 35}(3), 655-657 (1989)

\bibitem{Dodunekov} Dodunekov, S., Landgev, I.:  On near-MDS codes.
J. Geometry {\bf 54}, 30-43 (1995)

\bibitem{Landjev} Landjev, I., Rousseva, A.: The main conjecture for near-MDS
codes. Pascale Charpin, Nicolas Sendrier, Jean-Pierre Tillich.
WCC2015-9th International Workshop on Coding and Cryptography (2015)


\bibitem{Zhou} Zhou, Y., Wang, F., Xin, Y., Luo, S., Qing, S., Yang, Y.: A secret
sharing scheme based on near-MDS codes. In Proc. IC-NIDC, 833-836
(2009)

\bibitem{Dougherty} Dougherty, S. T., Mesnager, S., Sole, P.: Secret-sharing schemes
based on self-dual code. In Proc. Inf. Theory Workshop, 338-342
(2008)


\bibitem{Massey} Massey, J.: Some applications of coding theory in cryptography.
In Proc. 4th IMA Conf. Cryptogr. Coding, 33-47 (1995)






%
%







\bibitem{Kim} Kim, J. L., Lee, Y.: Euclidean and Hermitian self-dual MDS codes over large finite fields.
J. Comb. Theory Ser. A {\bf 105}(1), 79-95 (2006)


\bibitem{Gulliver} Gulliver, T. A., Kim, J. L., Lee, Y.: New MDS or near-MDS self-dual codes.
 IEEE Trans. Inf. Theory {\bf 54}(9), 4354-4360 (2008)



\bibitem{Guenda} Guenda, K.: New MDS self-dual codes over finite fields.
 Des. Codes Cryptogr. {\bf 62}(1), 31-42 (2012)


\bibitem{Tong} Tong, H., Wang, X. New MDS Euclidean and Hermitian self-dual codes over finite fields.
Adv. Pure Math.  {\bf 7}(5), 325-333 (2017)

\bibitem{Baicheva} Baicheva, T., Bouyukliev, I., Dodunekov, S., Willems, W.:
On the $[10,5,6]_{9}$ Reed-Solomon and Glynn codes. Mathematica
Balkanica, New Series {\bf 18}, 67-78 (2004)


\bibitem{Sok} Sok, L.: Explicit constructions of MDS self-dual codes.
IEEE Trans. Inf. Theory {\bf 66}(6), 3603-3615 (2020)


\bibitem{Jin} Jin, L., Xing, C.: New MDS self-dual codes from generalized Reed-Solomon codes.
 IEEE Trans. Inf. Theory {\bf 63}(3), 1434-1438 (2017)



\bibitem{Fang} Fang, W., Fu, F.-W.: New constructions of MDS Euclidean self-dual codes from GRS codes
and extended GRS codes. IEEE Trans. Inf. Theory {\bf 65}(9),
5574-5579 (2019)

\bibitem{Fang1} Fang, X., Labad, K., Liu, H., Luo, J.: New MDS self-dual codes over finite fields of
odd characteristic. Des. Codes Cryptogr. {\bf 88}(6), 1127-1138
(2020)


\bibitem{Fang4} Fang, W., Zhang, J., Xia, S.-T., Fu, F.-W.: A note on self-dual generalized Reed-Solomon
codes. arXiv:2005.11732, [online] (2020)

\bibitem{Zhang2} Zhang, A., Feng, K.: A unified approach to construct MDS self-dual codes via
Reed-Solomon codes. IEEE Trans. Inf. Theory {\bf 66}(6), 3650-3656
(2020)

\bibitem{Kotsireas} Kotsireas, I. S., Koukouvinos, C., Simos, D. E.: MDS and near-MDS
self-dual codes over large prime fields. Advances in Mathematics of
Communications {\bf 3}(4), 349-361 (2009)


\bibitem{Jin1} Jin, L., Kan, H.: Self-dual near MDS codes from elliptic curves.
IEEE Trans. Inform. Theory {\bf 65}(4), 2166-2170 (2019)




\bibitem{Guo} Guo, G., Li, R.: Hermitian self-Dual GRS and extended GRS codes.
IEEE Commun. Lett. {\bf 25}(4), 1062-1065 (2021)


\bibitem{Niu} Niu, Y., Yue, Q., Wu, Y., Hu, L.: Hermitian self-dual, MDS, and
generalized Reed-Solomon codes. IEEE Commun. Lett. {\bf 23}(5),
781-784 (2019)


\bibitem{Sheekey} Sheekey, J.: A new family of linear maximum rank distance codes.
Adv. Math. Commun. {\bf 10}, 475-488 (2016)

\bibitem{Beelen1} Beelen, P., Puchinger, S., Rosenkilde, n$\acute{e}$ N. J.:
Twisted Reed-Solomon codes. In IEEE Int. Symp. Inf. Theory (ISIT),
336-340 (2017)


\bibitem{Beelen2} Beelen, P., Bossert, M., Puchinger, S., Rosenkilde, n$\acute{e}$ N. J.: Structural
properties of twisted Reed-Solomon codes with applications to
code-based cryptography. In IEEE Int. Symp. Inf. Theory (ISIT),
946-950 (2018)




\bibitem{Lavauzelle} Lavauzelle, J., Renner, J.: Cryptanalysis of a system based on
twisted Reed-Solomon codes.  Designs, Codes and Cryptography {\bf
88}(7), 1285-1300 (2020)





\bibitem{Huang} Huang, D., Yue, Q., Niu, Y., Li, X.: MDS or NMDS self-dual codes from twisted
generalized Reed-Solomon codes. Des. Codes Cryptogr. 89, 2195-2209
(2021)




\bibitem{Wu} Wu, Y. Twisted Reed-Solomon codes with one-dimensional Hull.
IEEE Commun. Lett. {\bf 25}(2), 383-386 (2021)


\bibitem{WHL2021} Wu, Y., Hyun, J. Y., Lee, Y.: New LCD MDS codes of non-Reed-Solomon
type. IEEE Trans. Inf. Theory 67, 5069-5078 (2021)



\bibitem{Liu} Liu, H., Liu, S.: New constructions of MDS twisted Reed-Solomon codes and LCD MDS
codes. Des. Codes Cryptogr. 89, 2051-2065 (2021)



\bibitem{Mullen} Mullen, G. L., Panario, D.: Handbook of finite fields.
Discrete Mathematics and its Applications. Boca Raton, FL, USA: CRC
Press (2013)

\bibitem{Baldi2} Baldi, M., Chiaraluce, F., Rosenthal, J., Santini, P.,
Schipani, D.: Security of generalized Reed-Solomon code-based
cryptosystems. IET Information Security 13, 404-410 (2019)



\bibitem{Senger} Senger, C., Bohara, R.: A linear algebraic approach to subfield subcodes of GRS codes.
In Proc. IEEE Int. Symp. Inf. Theory (ISIT), 6-10 (2018)


\bibitem{N2021}  Neri, A.: Twisted linearized Reed-Solomon codes: A skew polynomial
framework, arXiv:2105.10451, [online] (2021)







\end{thebibliography}
\end{document}